\theoremstyle{plain}
\newtheorem{proposition}{Proposition}
\newtheorem{theorem}{Theorem}
\newtheorem{lemma}{Lemma}
\theoremstyle{definition}
\newtheorem{definition}{Definition}
\newcommand{\E}{{}\mathop{\rm E}}
\newcommand{\thmref}[1]{Theorem~\ref{#1}}
\newcommand{\lemref}[1]{Lemma~\ref{#1}}
\newcommand{\propref}[1]{Proposition~\ref{#1}}
\newcommand{\defref}[1]{Definition~\ref{#1}}
\newcommand{\appref}[1]{Appendix~\ref{#1}}
\newcommand{\reals}{\mathbb{R}}
\newcommand{\SW}{\text{SW}\xspace}
\newcommand{\OPT}{\text{OPT}\xspace}
\newcommand{\PoA}{\text{PoA}\xspace}
\newcommand{\OS}{\text{OS}\xspace}
\newcommand{\OXS}{\text{OXS}\xspace}
\newcommand{\GS}{\text{GS}\xspace}
\newcommand{\SM}{\text{SM}\xspace}
\newcommand{\XOS}{\text{XOS}\xspace}
\newcommand{\CF}{\text{CF}\xspace}
\newcommand{\DSE}{\text{DSE}\xspace}
\newcommand{\PNE}{\text{PNE}\xspace}
\newcommand{\MNE}{\text{MNE}\xspace}
\newcommand{\CE}{\text{CE}\xspace}
\newcommand{\CCE}{\text{CCE}\xspace}
\newcommand{\pgraph}[1]{\paragraph{#1.}}
\begin{document}

%% title, authors, affiliations
\title{Valuation Compressions in VCG-Based Combinatorial Auctions}

\author{
  Paul D\"utting\thanks{%
	Department of Computer Science, 
	Stanford University, 
  	353 Serra Mall, 
	Stanford, CA 94305-9045, USA. 
	Email: \texttt{paul.duetting@stanford.edu}. 
	This work was supported by an SNF Postdoctoral Fellowship.} 
  \and 
  Monika Henzinger\thanks{%
	Faculty of Computer Science, 
	University of Vienna, 
	W\"ahringer Stra{\ss}e 29, 
	1090 Wien, Austria. 
	Email: \texttt{\{monika.henzinger,martin.starnberger\}@univie.ac.at}.
	This work was funded by the Vienna Science and Technology Fund (WWTF) through project ICT10-002, and by the University of Vienna through IK I049-N.}
  \and 
  Martin Starnberger\footnotemark[2]
}

\date{}

\maketitle

%% abstract
\begin{abstract}
The focus of classic mechanism design has been on truthful direct-revelation mechanisms. In the context of combinatorial auctions the truthful direct-revelation mechanism that maximizes social welfare is the VCG mechanism. For many valuation spaces computing the allocation and payments of the VCG mechanism, however, is a computationally hard problem. We thus study the performance of the VCG mechanism when bidders are forced to choose bids from a subspace of the valuation space for which the VCG outcome can be computed efficiently. We prove improved upper bounds on the welfare loss for restrictions to additive bids and upper and lower bounds for restrictions to non-additive bids. These bounds show that the welfare loss increases in expressiveness. All our bounds apply to equilibrium concepts that can be computed in polynomial time as well as to learning outcomes. 
\end{abstract}

%% section: introduction
\section{Introduction}
An important field at the intersection of economics and computer science is the field of mechanism design. The goal of mechanism design is to devise mechanisms consisting of an outcome rule and a payment rule that implement desirable outcomes in strategic equilibrium. A fundamental result in mechanism design theory, the so-called {\em revelation principle}, asserts that any equilibrium outcome of any mechanism can be obtained as a truthful equilibrium of a direct-revelation mechanism. However, the revelation principle says nothing about the computational complexity of such a truthful direct-revelation mechanism.

In the context of combinatorial auctions the truthful direct-revelation mechanism that maximizes welfare is the \emph{Vickrey-Clarke-Groves (VCG) mechanism} \cite{Vickrey61,Clarke71,Groves73}. Unfortunately, for many valuation spaces computing the VCG allocation and payments is a computationally hard problem. This is, for example, the case for subadditive, fractionally subadditive, and submodular valuations \cite{LehmannLehmannNisan05}. We thus study the performance of the VCG mechanism in settings in which the bidders are forced to use bids from a subspace of the valuation space for which the allocation and payments can be computed efficiently. This is obviously the case for additive bids, where the VCG-based mechanism can be interpreted as a separate second-price auction for each item. But it is also the case for the syntactically defined bidding space \OXS, which stands for ORs of XORs of singletons, and the semantically defined bidding space \GS, which stands for gross substitutes. For \OXS bids polynomial-time algorithms for finding a maximum weight matching in a bipartite graph such as the algorithms of \cite{Tarjan83} and \cite{FredmanTarjan87} can be used. For \GS bids there is a fully polynomial-time approximation scheme due to \cite{KelsoCrawford82} and polynomial-time algorithms based on linear programming \cite{BikchandaniEtAl02} and convolutions of $M^{\#}$-concave functions \cite{Murota00,Murota96,MurotaTamura01}.

One consequence of restrictions of this kind, that we refer to as \emph{valuation compressions}, is that there is typically no longer a truthful dominant-strategy equilibrium that maximizes welfare. We therefore analyze the \emph{Price of Anarchy}, i.e., the ratio between the optimal welfare and the worst possible welfare at equilibrium. We focus on equilibrium concepts such as correlated equilibria and coarse correlated equilibria, which can be computed in polynomial time \cite{PapadimitriouRoughgarden08,JiangBrown11}, and naturally emerge from learning processes in which the bidders minimize external or internal regret \cite{FosterVohra97,HartMasColell00,LittlestoneWarmuth94,CesaBianchiEtAl07}.

% % % %

%% paragraph: our results
\pgraph{Our Contribution} 
We start our analysis by showing that for restrictions from subadditive valuations to additive bids deciding whether a pure Nash equilibrium exists is $\mathcal{NP}$-hard. This shows the necessity to study other bidding functions or other equilibrium concepts.

We then define a smoothness notion for mechanisms that we refer to as {\em relaxed smoothness}. This smoothness notion is weaker in some aspects and stronger in another aspect than the weak smoothness notion of \cite{SyrgkanisTardos13}. It is weaker in that it allows an agent's deviating bid to depend on the distribution of the bids of the other agents. It is stronger in that it disallows the agent's deviating bid to depend on his own bid. The former gives us more power to choose the deviating bid, and thus has the potential to lead to better bounds. The latter is needed to ensure that the bounds on the welfare loss extend to coarse correlated equilibria and minimization of external regret.

We use relaxed smoothness to prove an upper bound of $4$ on the Price of Anarchy with respect to correlated and coarse correlated equilibria. Similarly, we show that the average welfare obtained by minimization of internal and external regret converges to $1/4$-th of the optimal welfare. The proofs of these bounds are based on an argument similar to the one in \cite{FeldmanFuGravinLucier13}. Our bounds improve the previously known bounds for these solution concepts by a logarithmic factor. We also use relaxed smoothness to prove bounds for restrictions to non-additive bids. For subadditive valuations the bounds are $O(\log(m))$ resp.~$\Omega(1/\log(m))$, where $m$ denotes the number of items. For fractionally subadditive valuations the bounds are $2$ resp.~$1/2$. The proofs require novel techniques as non-additive bids lead to non-additive prices for which most of the techniques developed in prior work fail. The bounds extend the corresponding bounds of \cite{ChristodoulouKovacsSchapira08,BhawalkarRoughgarden11} from additive to non-additive bids.

Finally, we prove lower bounds on the Price of Anarchy. By showing that VCG-based mechanisms satisfy the {\em outcome closure property} of \cite{Milgrom10} we show that the Price of Anarchy with respect to pure Nash equilibria weakly increases with expressiveness. We thus extend the lower bound of $2$ from \cite{ChristodoulouKovacsSchapira08} from additive to non-additive bids. This shows that our upper bounds for fractionally subadditive valuations are tight. We prove a lower bound of $2.4$ on the Price of Anarchy with respect to pure Nash equilibria that applies to restrictions from subadditive valuations to \OXS bids. Together with the upper bound of $2$ of \cite{BhawalkarRoughgarden11} for restrictions from subadditive valuations to additive bids this shows that the welfare loss can strictly increase with expressiveness.

Our analysis leaves a number of interesting open questions, both regarding the computation of equilibria and regarding improved upper and lower bounds. Interesting questions regarding the computation of equilibria include whether or not mixed Nash equilibria can be computed efficiently for restrictions from subadditive to additive bids or whether pure Nash equilibria can be computed efficiently for restrictions from fractionally subadditive valuations to additive bids. A particularly interesting open problem regarding improved bounds is whether the welfare loss for computable equilibrium concepts and learning outcomes can be shown to be strictly larger for restrictions to non-additive, say \OXS, bids than for restrictions to additive bids. This would show that additive bids are not only sufficient for the best possible bound but also necessary.

\begin{table}[t!]
\centering
\caption{\small Summary of our results (bold) and the related work (regular) for coarse correlated equilibria and minimization of external regret through repeated play. The range indicates upper and lower bounds on the Price of Anarchy.}
\begin{tabular}{cccc}
  \hline
  & & \multicolumn{2}{c}{valuations}\\
  & & less general & subadditive\\
  \hline
  \multirow{2}{*}{bids} 
  & additive     & [2,2]      & [2,\textbf{4}] \\
  & more general & [\textbf{2}, \textbf{2}] & [\textbf{2.4},\textbf{O(log(m))}]\\
\hline
\end{tabular}
\end{table}

% % % %

%% paragraph: related work
\pgraph{Related Work}
The Price of Anarchy of restrictions to additive bids is analyzed in \cite{ChristodoulouKovacsSchapira08,BhawalkarRoughgarden11,FeldmanFuGravinLucier13} for second-price auctions and in \cite{HassidimKaplanMansourNisan11,FeldmanFuGravinLucier13} for first price auctions. The case where all items are identical, but additional items contribute less to the valuation and agents are forced to place additive bids is analyzed in \cite{MarkakisTelelis12,KeijzerMarkakisSchaeferTelelis13}.
Smooth games are defined and analyzed in \cite{Roughgarden09,Roughgarden12}. The smoothness concept is extended to mechanisms in \cite{SyrgkanisTardos13}. 

% % % %

%% section: preliminaries
\section{Preliminaries}\label{sec:prelim}
\pgraph{Combinatorial Auctions}
In a {\em combinatorial auction} there is a set $N$ of $n$ {\em agents} and a set $M$ of $m$ {\em items}. Each agent $i \in N$ employs preferences over bundles of items, represented by a {\em valuation function} $v_i: 2^M \rightarrow \reals_{\ge 0}$. We use $V_i$ for the {\em class of valuation functions} of agent $i$, and $V = \prod_{i \in N} V_i$ for the class of joint valuations. We write $v = (v_i, v_{-i}) \in V$, where $v_i$ denotes agent $i$'s valuation and $v_{-i}$ denotes the valuations of all agents other than $i$. We assume that the valuation functions are {\em normalized} and {\em monotone}, i.e., $v_i(\emptyset) = 0$ and $v_i(S) \leq v_i(T)$ for all $S \subseteq T$.  

A mechanism $M=(f,p)$ is defined by an {\em allocation rule} $f: B \rightarrow \mathcal{P}(M)$ and a {\em payment rule} $p: B \rightarrow \reals^n_{\ge 0}$, where $B$ is the {\em class of  bidding  functions} and $\mathcal{P}(M)$ denotes the {\em set of  allocations} consisting of all possible partitions $X$ of the set of items $M$ into $n$ sets $X_1, \dots, X_n$. As with valuations we write $b_i$ for agent $i$'s bid, and $b_{-i}$ for the bids by the agents other than $i$. We define the {\em social welfare} of an {\em allocation} $X$ as the sum $\SW(X) = \sum_{i\in N} v_i(X_i)$ of the agents' valuations and use $\OPT(v)$ to denote the maximal achievable social welfare. We say that an allocation rule $f$ is {\em efficient} if for all bids $b$ it chooses the allocation $f(b)$ that maximizes the sum of the agent's bids, i.e., $\sum_{i \in N} b_i(f_i(b)) = \max_{X \in \mathcal{P}(M)} \sum_{i \in N} b_i(X_i).$ We assume {\em quasi-linear preferences}, i.e., agent $i$'s {\em utility} under mechanism $M$ given valuations $v$ and bids $b$ is $u_i(b,v_i) = v_i(f_i(b)) - p_i(b).$

We focus on the {\em Vickrey-Clarke-Groves (VCG)} mechanism \cite{Vickrey61,Clarke71,Groves73}. Define $b_{-i}(S) = \max_{X \in \mathcal{P}(S)} \sum_{j\neq i} b_j(X_j)$ for all $S \subseteq M$. The VCG mechanisms starts from an efficient allocation rule $f$ and computes the payment of each agent $i$ as $p_i(b) = b_{-i}(M) - b_{-i}(M\setminus f_i(b))$. As the payment $p_i(b)$ only depends on the bundle $f_i(b)$ allocated to agent $i$ and the bids $b_{-i}$ of the agents other than $i$, we also use $p_i(f_i(b),b_{-i})$ to denote agent $i$'s payment.

If the bids are additive then the VCG prices are additive, i.e., for every agent $i$ and every bundle $S \subseteq M$ we have  $p_i(S,b_{-i}) = \sum_{j \in S} \max_{k \neq i} b_k(j)$. Furthermore, the set of items that an agent wins in the VCG mechanism are the items for which he has the highest bid, i.e., agent $i$ wins item $j$ against bids $b_{-i}$ if $b_i(j) \ge \max_{k \neq i} b_k(j) = p_i(j)$ (ignoring ties). Many of the complications in this paper come from the fact that these two observations do {\em not} apply to non-additive bids.

% % % % 

%% paragraph: valuation compressions
\pgraph{Valuation Compressions}
Our main object of study in this paper are {\em valuation compressions}, i.e., restrictions of the class of bidding functions $B$ to a strict subclass of the class of valuation functions $V$.\footnote{This definition is consistent with the notion of {\em simplification} in \cite{Milgrom10,DuettingFischerParkes11}.} Specifically, we consider valuations and bids from the following hierarchy due to \cite{LehmannLehmannNisan05},
\[
	\OS \subset \OXS \subset \GS \subset \SM \subset \XOS \subset \CF\enspace,
\]
where OS stands for additive, GS for gross substitutes, SM for submodular, and CF for subadditive.

The classes OXS and XOS are syntactically defined. Define OR ($\vee$) as\linebreak $(u \vee w)(S) = \max_{T \subseteq S}(u(T) + w(S \setminus T))$ and XOR ($\otimes$) as $(u \otimes w) (S) =\linebreak \max(u(S), w(S))$. Define XS as the class of valuations that assign the same value to all bundles that contain a specific item and zero otherwise. Then OXS is the class of valuations that can be described as ORs of XORs of XS valuations and XOS is the class of valuations that can be described by XORs of ORs of XS valuations. 

Another important class is the class $\beta$\text{-XOS}, where $\beta \ge 1$, of $\beta$-fractionally subadditive valuations. A valuation $v_i$ is $\beta$-fractionally subadditive if for every subset of items $T$ there exists an additive valuation $a_i$ such that (a) $\sum_{j \in T} a_i(j) \ge v_i(T)/\beta$ and (b) $\sum_{j \in S} a_i(j) \le v_i(S)$ for all $S \subseteq T$.  It can be shown that the special case $\beta = 1$ corresponds to the class XOS, and that the class CF is contained in $O(\log(m))$\text{-XOS} (see, e.g., Theorem~5.2 in \cite{BhawalkarRoughgarden11}).  Functions in XOS are called {\em fractionally subadditive}.

% % % % 

%% paragraph: solution concepts
\pgraph{Solution Concepts}
We use game-theoretic reasoning to analyze how agents interact with the mechanism, a desirable criterion being stability according to some solution concept. In the {\em complete information} model the agents are assumed to know each others' valuations, and in the {\em incomplete information} model the agents' only know from which distribution the valuations of the other agents are drawn. In the remainder we focus on complete information. The definitions and our results for incomplete information are given in Appendix~\ref{app:incomplete}.

The static solution concepts that we consider in the complete information setting are:
\[
	\DSE \subset \PNE \subset \MNE \subset \CE \subset \CCE\enspace,
\] 
where DSE stands for dominant strategy equilibrium, PNE for pure Nash equilibrium, MNE for mixed Nash equilibrium, CE for correlated equilibrium, and CCE for coarse correlated equilibrium. 

In our analysis we only need the definitions of pure Nash and coarse correlated equilibria. Bids $b \in B$ constitute a {\em pure Nash equilibrium (PNE)} for valuations $v \in V$ if for every agent $i \in N$ and every bid $b'_i \in B_i$, $u_i(b_i,b_{-i},v_i) \ge u_i(b'_i,b_{-i},v_i)$. A distribution $\mathcal{B}$ over bids $b \in B$ is a {\em coarse correlated equilibrium (CCE)} for valuations $v \in V$ if for every agent $i \in N$ and every pure deviation $b'_i \in B_i$, $\E_{b \sim \mathcal{B}}[u_i(b_i,b_{-i},v_i)] \ge \E_{b \sim \mathcal{B}}[u_i(b'_i,b_{-i},v_i)]$. 

The dynamic solution concept that we consider in this setting is regret minimization. A sequence of bids $b^1,\dots,b^T$ incurs {\em vanishing average external regret} if for all agents $i$, $\sum_{t=1}^{T} u_i(b^t_i,b^t_{-i},v_i) \ge \max_{b'_i} \sum_{t=1}^{T} u_i(b'_i,b^t_{-i},v_i) - o(T)$ holds, where $o(\cdot)$ denotes the little-oh notation. The empirical distribution of bids in a sequence of bids that incurs vanishing external regret converges to a coarse correlated equilibrium (see, e.g., Chapter 4 of \cite{agtbook}).

% % % %

%% paragraph: price of anarchy
\pgraph{Price of Anarchy}
We quantify the welfare loss from valuation compressions by means of the {\em Price of Anarchy (PoA)}. 

The PoA with respect to PNE for valuations $v \in V$ is defined as the worst ratio between the optimal social welfare $\OPT(v)$ and the welfare $\SW(b)$ of a PNE $b \in B$,
\[\PoA(v) = \max_{b: \ \text{PNE}} \frac{\OPT(v)}{\SW(b)}\enspace .\]

Similarly, the PoA with respect to MNE, CE, and CCE for valuations $v \in V$ is the worst ratio between the optimal social welfare $\SW(b)$ and the expected welfare $\E_{b\sim \mathcal{B}}[\SW(b)]$ of a MNE, CE, or CCE $\mathcal{B}$, \[\PoA(v) = \max_{\mathcal{B}: \ \text{MNE}, \ \text{CE} \ \text{or} \ \text{CCE}} \frac{\OPT(v)}{\E_{b\sim \mathcal{B}}[\SW(b)]}\enspace .\]

We require that the bids $b_i$ for a given valuation $v_i$ are {\em conservative}, i.e., $b_i(S) \le v_i(S)$ for all bundles $S \subseteq M$. Similar assumptions are made and economically justified in the related work \cite{ChristodoulouKovacsSchapira08,BhawalkarRoughgarden11,FeldmanFuGravinLucier13}.

%% section: hardness result for pne with additive bids
\section{Hardness Result for PNE with Additive Bids}\label{sec:hardness}
Our first result is that deciding whether there exists a pure Nash equilibrium of the VCG mechanism for restrictions from subadditive valuations to additive bids is $\mathcal{NP}$-hard. The proof of this result, which is given in \appref{app:hardness}, is by reduction from \textsc{3-Partition} \cite{GareyJohnson79} and uses an example with no pure Nash equilibrium from \cite{BhawalkarRoughgarden11}. The same decision problem is simple for $V\subseteq \XOS$ because pure Nash equilibria are guaranteed to exist \cite{ChristodoulouKovacsSchapira08}.

\begin{theorem}\label{thm:np}
Suppose that $V = \CF$, $B = \OS$, that the VCG mechanism is used, and that agents bid conservatively. Then it is $\mathcal{NP}$-hard to decide whether there exists a PNE.
\end{theorem}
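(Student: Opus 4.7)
The plan is to reduce from \textsc{3-Partition}, which asks whether a multiset of $3n$ positive integers $a_1,\dots,a_{3n}$ with $T/4 < a_i < T/2$ and total sum $nT$ can be split into $n$ triples each summing to exactly $T$. The high-level strategy is to design an auction in which the items and agents naturally split into a ``partition gadget'' and an embedded copy of the Bhawalkar--Roughgarden (BR) non-existence example, such that the BR gadget is ``quiet'' (admits local equilibrium behaviour) precisely when the items in the partition gadget can be grouped into triples of value $T$.

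First, I would set up the partition gadget: introduce $3n$ items with values encoding the $a_i$'s, together with $n$ ``slot'' agents, each with a subadditive valuation that gives value roughly $T$ to any bundle whose total weight is at least $T$ (and significantly less otherwise). Under conservative additive bidding and VCG, a PNE in this sub-instance forces each slot agent to win some bundle of weight $\ge T$; because the total weight is exactly $nT$ and each $a_i$ lies in $(T/4,T/2)$, any such feasible allocation consists of $n$ triples summing to exactly $T$, i.e.\ a valid 3-partition. Second, I would attach the BR gadget from \cite{BhawalkarRoughgarden11} through a shared ``trigger'' agent whose subadditive valuation couples the two components: when the partition side admits an exact 3-partition, the trigger agent's best response makes the BR gadget behave like its stable sub-configuration (so a PNE of the whole game can be exhibited explicitly); when it does not, the trigger agent is pushed into the unstable configuration of \cite{BhawalkarRoughgarden11}, where no additive conservative best-response profile is a PNE.

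The correctness proof then has two directions. For \emph{YES $\Rightarrow$ PNE exists}, I would construct an explicit bid profile: each slot agent bids the corresponding $a_j$ on each item of its assigned triple and zero elsewhere, and the BR-gadget agents play the known stable profile; I would then verify directly that no single agent has a profitable conservative additive deviation, using the ``gross-substitutes-like'' structure of additive VCG prices (the set an agent wins is just the items where its bid beats the max competing bid). For \emph{PNE exists $\Rightarrow$ YES}, I would show contrapositively that if the items of the partition gadget cannot be grouped into valid triples, then the trigger agent is forced, in any candidate PNE, into the activated configuration of the BR gadget, contradicting \cite{BhawalkarRoughgarden11}'s non-existence argument.

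The main obstacle I expect is the coupling step: one needs subadditive valuations for the slot agents and the trigger agent that (i) enforce the combinatorial constraint of 3-Partition via conservative additive bidding at equilibrium, yet (ii) do not accidentally create new equilibria that sidestep the BR gadget when no 3-partition exists. Getting both the valuation numerics and the gadget interface right — so that conservative additive best responses cannot ``smooth over'' the instability of \cite{BhawalkarRoughgarden11} by shifting bids across the partition items — is where the bulk of the technical care must go; the reduction is polynomial by construction since the gadget has constant size and the partition component is linear in the 3-Partition input.
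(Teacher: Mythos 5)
Your high-level plan — reduce from \textsc{3-Partition}, embed the Bhawalkar--Roughgarden non-existence gadget, and tie the existence of a PNE to the existence of a valid 3-partition — is the same strategy the paper uses, and the reduction target and gadget components are correct. But the architecture of your coupling has a real gap, and it differs in a substantive way from what the paper does.

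The paper does \emph{not} use a single shared trigger agent and a single BR gadget. Instead it creates $m$ parallel, disjoint copies of a two-agent, three-item BR gadget: agents $B_i$ and $C_i$ fight over a private triple $\mathcal{J}_i$ of gadget items, for $i=1,\dots,m$. Each $B_i$ is simultaneously a ``slot agent'': its valuation is the \emph{pointwise maximum} of an additive valuation over the shared partition items $\mathcal{I}$ (with weights $w_e$) and a step-function valuation over its private $\mathcal{J}_i$. The ``max'' construction is what keeps $v_{B_i}$ subadditive and lets $B_i$ freely choose which side of the game to play in. This parallelism is essential because 3-Partition requires $m$ constraints to hold simultaneously, and each $B_i$ needs its own independent ``threat'' to enforce its own constraint. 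Your single-trigger design provides that threat to at most one agent and leaves unexplained how the remaining slot agents are forced into a valid assignment.

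This points to the second, more basic gap in your write-up: you never supply the \emph{forcing mechanism} that compels slot agents to win weight $\ge T$ at equilibrium. You assert ``a PNE in this sub-instance forces each slot agent to win some bundle of weight $\ge T$,'' but with your threshold-style slot valuations, equilibria where some slot agents sit out or accept small bundles are not obviously ruled out. The paper's forcing mechanism is exactly the BR gadget itself: it proves two facts that hold in \emph{every} candidate PNE — (1) agent $B_i$ can always secure utility at least $B$ by bidding $5B$ on the cheapest item of $\mathcal{J}_i$, so its equilibrium utility is $\ge B$; and (2) $B_i$ cannot actually \emph{win} any item of $\mathcal{J}_i$ in a PNE, because otherwise $C_i$ would profitably deviate. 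Together these force every $B_i$ to earn utility $\ge B$ from the $\mathcal{I}$ items alone, and since the total weight is $mB$ with $m$ such agents, the allocation must be a valid 3-partition. Without an analogue of these two facts your backward direction does not go through.

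Finally, your forward direction asserts that the BR agents ``play the known stable profile.'' The BR gadget has no PNE on its own — that is its entire point — so there is no stable profile to fall back on. What actually makes the composite game stable in the YES case is that each $B_i$'s outside option (winning its assigned triple of $\mathcal{I}$ items at zero price) yields utility exactly $B$, matching the best it could extract by attacking $\mathcal{J}_i$, so it is content to stay out and $C_i$ faces no competition. That subtle indifference is the linchpin of the construction and is missing from your sketch.
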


%% section: smoothness notion and extension results
\section{Smoothness Notion and Extension Results}\label{sec:smoothness}

Next we define a smoothness notion for mechanisms. It is weaker in some aspects and stronger in another aspect than the weak smoothness notion in \cite{SyrgkanisTardos13}. It is weaker because it allows agent $i$'s deviating bid $a_i$ to depend on the marginal distribution $\mathcal{B}_{-i}$ of the bids $b_{-i}$ of the agents other than $i$. This gives us more power in choosing the deviating bid, which might lead to better bounds. It is stronger because it does \emph{not} allow agent $i$'s deviating bid $a_i$ to depend on his own bid $b_i$. This allows us to prove bounds that extend to coarse correlated equilibria and not just correlated equilibria.  

\begin{definition}\label{def:smoothness}
A mechanism is relaxed $(\lambda,\mu_1,\mu_2)$-smooth for $\lambda,\mu_1,\mu_2 \ge 0$ if for every valuation profile $v \in V$, every distribution over bids $\mathcal{B}$, and every agent $i$ there exists a bid $a_i(v,\mathcal{B}_{-i})$ such that
\begin{multline*}
	\sum_{i \in N} \E_{b_{-i}\sim\mathcal{B}_{-i}}[u_i((a_i,b_{-i}),v_i)] \ge \lambda \cdot \OPT(v) - \mu_1 \cdot \sum_{i \in N} \E_{b\sim\mathcal{B}}[p_i(X_i(b),b_{-i})] - \mu_2 \cdot \sum_{i \in N} \E_{b\sim\mathcal{B}}[b_i(X_i(b))].
\end{multline*}
\end{definition}

\begin{theorem}\label{thm:cce}
If a mechanism is relaxed $(\lambda,\mu_1,\mu_2)$-smooth, then the Price of Anarchy under conservative bidding with respect to coarse correlated equilibria is at most
\[
\frac{\max\{\mu_1,1\}+\mu_2}{\lambda}.
\]
\end{theorem}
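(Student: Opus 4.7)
The plan is to combine the CCE deviation inequality with the relaxed smoothness inequality, and then use conservative bidding together with a VCG-specific no-overpayment bound to close the system.

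First, I fix valuations $v \in V$ and a CCE $\mathcal{B}$. For each agent $i$, let $a_i := a_i(v,\mathcal{B}_{-i})$ be the deviating bid guaranteed by Definition~\ref{def:smoothness}. The key point is that $a_i$ is a \emph{single fixed} bid (it depends on $v$ and $\mathcal{B}_{-i}$ but not on $b_i$), so it is a valid CCE deviation, and hence
\[
\E_{b\sim\mathcal{B}}[u_i(b,v_i)] \ge \E_{b_{-i}\sim\mathcal{B}_{-i}}[u_i((a_i,b_{-i}),v_i)]\enspace .
\]
This is precisely where the stronger clause of relaxed smoothness (independence of $a_i$ from $b_i$) is needed; the weaker clause (dependence on $\mathcal{B}_{-i}$) is harmless here.

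Second, I sum over $i$, apply relaxed smoothness to the right-hand side, and substitute $u_i = v_i(X_i(b)) - p_i(X_i(b),b_{-i})$ on the left. Using $\sum_i \E[v_i(X_i(b))] = \E[\SW(b)]$ and rearranging yields
\[
\E[\SW(b)] \ge \lambda \cdot \OPT(v) + (1-\mu_1)\sum_i \E[p_i(X_i(b),b_{-i})] - \mu_2 \sum_i \E[b_i(X_i(b))]\enspace .
\]

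Third, I split on $\mu_1$, which is why the theorem takes the $\max\{\mu_1,1\}$ form. If $\mu_1 \le 1$, the payment term is non-negative and can be dropped; then conservative bidding ($b_i \le v_i$) gives $\sum_i \E[b_i(X_i(b))] \le \E[\SW(b)]$, so $(1+\mu_2)\E[\SW(b)] \ge \lambda\cdot\OPT(v)$. If $\mu_1 > 1$, I upper-bound each VCG payment by the agent's own bid on the winning bundle, $p_i(X_i(b),b_{-i}) \le b_i(X_i(b))$; this follows because the efficient VCG allocation beats the one that leaves $i$ with the empty bundle, i.e., $b_i(X_i(b)) + b_{-i}(M\setminus X_i(b)) \ge b_{-i}(M)$. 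Substituting this (and flipping the direction because $1-\mu_1<0$) followed by conservative bidding collapses everything to $(\mu_1+\mu_2)\E[\SW(b)] \ge \lambda\cdot\OPT(v)$. The two cases combine to the claimed bound.

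I expect the only subtle step to be the $\mu_1>1$ case: it is the VCG-specific no-overpayment inequality $p_i \le b_i(X_i(b))$ that makes the term $(1-\mu_1)\sum_i\E[p_i]$ controllable by the social welfare rather than by the (unbounded) gap between payments and bids. For additive bids this is immediate from the per-item second-price structure, but here it has to be extracted from the optimality of $X(b)$ for general non-additive $B$.
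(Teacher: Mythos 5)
Your proof is correct and follows essentially the same argument as the paper: decompose $\E[\SW(b)]$ into utilities plus payments, invoke the CCE deviation inequality (noting, as you correctly emphasize, that $a_i$ not depending on $b_i$ is exactly what makes this legitimate), apply relaxed smoothness, and case-split on $\mu_1$. The only cosmetic difference is in the $\mu_1 > 1$ case, where you bound $p_i(X_i(b),b_{-i}) \le b_i(X_i(b))$ before applying conservativeness, while the paper directly invokes $\E[p_i(X_i(b),b_{-i})] \le \E[v_i(X_i(b))]$; these are the same two facts applied in a slightly different order.
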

\begin{proof}
Fix valuations $v$. Consider a coarse correlated equilibrium $\mathcal{B}$. For each $b$ from the support of $\mathcal{B}$ denote the allocation for $b$ by $X(b) = (X_1(b),\dots,X_n(b))$. Let $a = (a_1,\dots,a_n)$ be defined as in \defref{def:smoothness}. Then,
\begin{align*}
	\E_{b\sim\mathcal{B}}[\text{SW}(b)]
	&= \sum_{i \in N} \E_{b\sim\mathcal{B}}[u_i(b,v_i)] + \sum_{i \in N} \E_{b\sim\mathcal{B}}[p_i(X_i(b),b_{-i})]\displaybreak[0]\\
	&\ge \sum_{i \in N} \E_{b_{-i}\sim\mathcal{B}_{-i}}[u_i((a_i,b_{-i}),v_i)] + \sum_{i \in N} \E_{b\sim\mathcal{B}}[p_i(X_i(b),b_{-i})]\displaybreak[0]\\
	&\ge \lambda\  \OPT(v) - (\mu_1 - 1) \sum_{i \in N} \E_{b\sim\mathcal{B}}[p_i(X_i(b),b_{-i})] - \mu_2  \sum_{i \in N} \E_{b\sim\mathcal{B}}[b_i(X_i(b))], 
\end{align*}
where the first equality uses the definition of $u_i(b,v_i)$ as the difference between $v_i(X_i(b))$ and $p_i(X_i(b),b_{-i})$, the first inequality uses the fact that $\mathcal{B}$ is a coarse correlated equilibrium, and the second inequality holds because $a=(a_1,\dots,a_n)$ is defined as in \defref{def:smoothness}.

Since the bids are conservative this can be rearranged to give
\begin{align*}
	(1+\mu_2) \E_{b\sim\mathcal{B}}[\text{SW}(b)]
	&\ge \lambda\  \OPT(v) - (\mu_1 - 1) \sum_{i \in N} \E_{b\sim\mathcal{B}}[p_i(X_i(b),b_{-i})]. 
\end{align*}

For $\mu_1 \le 1$ the second term on the right hand side is lower bounded by zero and the result follows by rearranging terms. For $\mu_1 > 1$ we use that $\E_{b\sim\mathcal{B}}[p_i(X_i(b),b_{-i})] \le \E_{b\sim\mathcal{B}}[v_i(X_i(b))]$ to lower bound the second term on the right hand side and the result follows by rearranging terms.
\end{proof}

\begin{theorem}\label{thm:learning}
If a mechanism is relaxed $(\lambda,\mu_1,\mu_2)$-smooth and $(b^1,\dots,b^T)$ is a sequence of conservative bids with vanishing external regret, then
\[
\frac{1}{T} \sum_{t=1}^{T} \SW(b^t) \ge \frac{\lambda}{\max\{\mu_1,1\}+\mu_2}\cdot \OPT(v) - o(1).
\]
\end{theorem}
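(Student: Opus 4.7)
The plan is to reduce the claim to (essentially) the computation already performed for Theorem~\ref{thm:cce}, by applying relaxed smoothness to the \emph{empirical} distribution of bids induced by the sequence $b^1,\dots,b^T$, and to use vanishing external regret to replace the CCE inequality of that earlier proof.

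Concretely, I would let $\mathcal{B}$ be the uniform distribution over the $T$ bid profiles $b^1,\dots,b^T$, so that the marginals $\mathcal{B}_{-i}$ are the empirical distributions of the opponents' bids. For each agent $i$, let $a_i = a_i(v,\mathcal{B}_{-i})$ be the deviation guaranteed by Definition~\ref{def:smoothness}. Because $a_i$ does not depend on $b_i$ (this is precisely the aspect in which relaxed smoothness is \emph{stronger} than weak smoothness), $a_i$ is a single fixed bid against which external regret compares, yielding
\[
\frac{1}{T}\sum_{t=1}^{T} u_i(b_i^t,b_{-i}^t,v_i) \;\ge\; \frac{1}{T}\sum_{t=1}^{T} u_i(a_i,b_{-i}^t,v_i) - o(1) \;=\; \E_{b_{-i}\sim\mathcal{B}_{-i}}[u_i((a_i,b_{-i}),v_i)] - o(1).
\]

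Next I would decompose the time-averaged welfare exactly as in Theorem~\ref{thm:cce}, writing
\[
\frac{1}{T}\sum_{t=1}^{T}\SW(b^t) = \frac{1}{T}\sum_{t=1}^{T}\sum_{i\in N} u_i(b^t,v_i) + \frac{1}{T}\sum_{t=1}^{T}\sum_{i \in N} p_i(X_i(b^t),b_{-i}^t),
\]
sum the per-agent regret inequality above over $i$, and then invoke relaxed $(\lambda,\mu_1,\mu_2)$-smoothness applied to $\mathcal{B}$ to lower bound the first group of terms by $\lambda\cdot\OPT(v)$ minus the payment and bid terms. Combining gives
\[
(1+\mu_2)\cdot\frac{1}{T}\sum_{t=1}^{T}\SW(b^t) \;\ge\; \lambda\cdot\OPT(v) - (\mu_1 - 1)\cdot\frac{1}{T}\sum_{t=1}^{T}\sum_{i\in N}p_i(X_i(b^t),b_{-i}^t) - o(1),
\]
after using conservative bidding $b_i^t(X_i(b^t)) \le v_i(X_i(b^t))$ to absorb the $\mu_2$ term into $\mu_2\cdot\SW(b^t)$.

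Finally I would split on whether $\mu_1 \le 1$ (in which case the remaining payment term is nonpositive and can be dropped) or $\mu_1 > 1$ (in which case $p_i \le v_i$ by individual rationality under conservative bidding lets me absorb it into $(\mu_1 - 1)\cdot \SW(b^t)$ as well), replicating the case analysis at the end of the proof of Theorem~\ref{thm:cce}. Rearranging yields the advertised bound $\lambda/(\max\{\mu_1,1\}+\mu_2)\cdot\OPT(v) - o(1)$. The only real subtlety — and what the plan is designed to respect — is that $a_i$ must be a single bid for external regret to apply; this is exactly why Definition~\ref{def:smoothness} forbids $a_i$ from depending on $b_i$, and if one allowed such dependence one would have to retreat to internal (swap) regret or CE instead.
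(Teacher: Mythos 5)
Your proposal is correct and follows essentially the same route as the paper's own proof: both take $\mathcal{B}$ to be the empirical distribution of the observed bid profiles, use the fact that $a_i = a_i(v,\mathcal{B}_{-i})$ is a single fixed deviation (independent of $b_i$) so that vanishing external regret applies, decompose average welfare into average utility plus average payments, invoke relaxed smoothness, and then replicate the case analysis on $\mu_1$ from Theorem~\ref{thm:cce}. The only cosmetic difference is that the paper packages the regret term into an explicit quantity $\Delta(a)$ and bounds it at the end, whereas you apply the regret inequality per agent up front; the substance is the same.
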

\begin{proof}
Fix valuations $v$. Consider a sequence of bids $b^1,\dots,b^T$ with vanishing average external regret. For each $b^t$ in the sequence of bids denote the corresponding allocation by $X(b^t) = (X_1(b^t),\dots,X_n(b^t))$. Let $\delta^t_i(a_i) = u_i(a_i,b^t_{-i},v_i) - u_i(b^t,v_i)$ and let $\Delta(a) = \frac{1}{T} \sum_{t=1}^{T} \sum_{i=1}^{n} \delta^t_i(a_i)$. Let $a = (a_1,\dots,a_n)$ be defined as in \defref{def:smoothness}, where $\mathcal{B}$ is the empirical distribution of bids. Then,
\begin{align*}
	\frac{1}{T} \sum_{t=1}^{T} \text{SW}(b^t)
	&= \frac{1}{T} \sum_{t=1}^{T} \sum_{i=1}^{n} u_i(b^t_i,b^t_{-i},v_i) + \frac{1}{T} \sum_{t=1}^{T} \sum_{i=1}^{n} p_i(X_i(b^t),b^{t}_{-i})\\ \displaybreak[0]
	&= \frac{1}{T} \sum_{t=1}^{T} \sum_{i=1}^{n} u_i(a_i,b^t_{-i},v_i) + \frac{1}{T} \sum_{t=1}^{T} \sum_{i=1}^{n} p_i(X_i(b^t),b^{t}_{-i}) - \Delta(a)\\ \displaybreak[0]
	&\ge \lambda\ \OPT(v) - (\mu_1 - 1) \frac{1}{T} \sum_{t=1}^{T} \sum_{i=1}^{n} p_i(X_i(b^t,b^{t}_{-i})) - \mu_2  \frac{1}{T} \sum_{t=1}^{T} \sum_{i=1}^{n} b_i(X_i(b^t)) - \Delta(a),
\end{align*}
where the first equality uses the definition of $u_i(b^t_i,b^t_{-i},v_i)$ as the difference between $v_i(X_i(b^t))$ and $p_i(X_i(b^t),b^{t}_{-i})$, the second equality uses the definition of $\Delta(a)$, and the third inequality holds because $a=(a_1,\dots,a_n)$ is defined as in \defref{def:smoothness}.

Since the bids are conservative this can be rearranged to give
\begin{align*}
	(1+\mu_2)\ \frac{1}{T} \sum_{t=1}^{T} \text{SW}(b^t)
	&\ge \lambda\ \OPT(v) - (\mu_1 - 1) \frac{1}{T} \sum_{t=1}^{T} \sum_{i=1}^{n} p_i(X_i(b^t),b^{t}_{-i}) - \Delta(a).
\end{align*}

For $\mu_1 \le 1$ the second term on the right hand side is lower bounded by zero and the result follows by rearranging terms provided that $\Delta(a) = o(1)$. For $\mu_1 > 1$ we use that $\frac{1}{T} \sum_{t=1}^{T} \sum_{i=1}^{n} p_i(X_i(b^t),b^{t}_{-i}) \le \frac{1}{T} \sum_{t=1}^{T} \sum_{i=1}^{n} v_i(X_i(b^t))$ to lower bound the second term on the right hand side and the result follows by rearranging terms provided that $\Delta(a) = o(1)$.

The term $\Delta(a)$ is bounded by $o(1)$ because the sequence of bids $b^1,\dots,b^T$ incurs vanishing average external regret and, thus,
\begin{align*}
	\Delta(a)
        %=\frac{1}{T} \sum_{t=1}^{T} \sum_{i=1}^{n} \delta^t_i(a_i)
	\le \frac{1}{T} \sum_{i=1}^{n} \left[ \max_{b'_i} \sum_{t=1}^{T} u_i(b'_i,b^t_{-i},v_i) - \sum_{t=1}^{T} u_i(b^t,v_i) \right]
	\le \frac{1}{T} \sum_{i=1}^{n} o(T). \tag*{\qedhere}
\end{align*}
\end{proof}

%% section: upper bounds for cce and minimization of external regret with additive bids
\section{Upper Bounds for CCE and Minimization of External Regret for Additive Bids}\label{sec:constant-bound}
We conclude our analysis of restrictions to additive bids by showing how the argument of \cite{FeldmanFuGravinLucier13} can be adopted to show that for restrictions from $V= CF$ to $B = OS$ the VCG mechanism is relaxed $(1/2,0,1)$-smooth. Using \thmref{thm:cce} we obtain an upper bound of $4$ on the Price of Anarchy with respect to coarse correlated equilibria. Using \thmref{thm:learning} we conclude that the average social welfare for sequences of bids with vanishing external regret converges to at least $1/4$ of the optimal social welfare. We thus improve the best known bounds by a logarithmic factor.

\begin{proposition}\label{prop:additive}
Suppose that $V = \CF$ and that $B = \OS$. Then the VCG mechanism is relaxed $(1/2,0,1)$-smooth under conservative bidding. 
\end{proposition}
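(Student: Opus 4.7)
The plan is to adapt the randomized-deviation technique of Feldman, Fu, Gravin, and Lucier~\cite{FeldmanFuGravinLucier13} to the relaxed-smoothness framework of \defref{def:smoothness}. Fix a valuation profile $v \in V$ and a distribution over bids $\mathcal{B}$, and let $O = (O_1,\dots,O_n)$ be a welfare-maximizing allocation under $v$, so that $\OPT(v) = \sum_i v_i(O_i)$. Since $B = \OS$, VCG on a bid profile decouples into $m$ independent second-price auctions, one per item: if agent $i$ deviates to an additive bid $a_i$ against $b_{-i}$, then $i$ wins exactly $W_i = \{j : a_i(j) > p_j\}$ where $p_j := \max_{k\neq i} b_k(j)$, and pays $\sum_{j \in W_i} p_j$. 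The deviating bid $a_i = a_i(v, \mathcal{B}_{-i})$ will be a randomized additive bid supported on $O_i$, depending on $v$ and on the marginal $\mathcal{B}_{-i}$ only; this dependence pattern is exactly what \defref{def:smoothness} demands so that \thmref{thm:cce} and \thmref{thm:learning} apply.

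Next I would construct $a_i$ via a random subset of $O_i$. Draw $T \subseteq O_i$ by including each $j \in O_i$ independently with probability $1/2$; by subadditivity and the symmetry of $T$ and $O_i \setminus T$, one gets $\E_T[v_i(T)] \ge v_i(O_i)/2$. Following FFGL, $a_i$ then bids a threshold value calibrated to $\mathcal{B}_{-i}$ (essentially a suitable scaling of $v_i(T)/|T|$) on each $j \in T$ and $0$ elsewhere. Under this deviation agent $i$ wins some $W_i \subseteq T$ and collects utility $v_i(W_i) - \sum_{j \in W_i} p_j$.

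To close the inequality, I would split each realization $T$ into won items $W_i$ and lost items $T \setminus W_i$. On $W_i$ the payment is bounded by the bid, and subadditivity gives $v_i(W_i) \ge v_i(T) - v_i(T \setminus W_i)$; taking $\E_T$ and using $\E_T[v_i(T)] \ge v_i(O_i)/2$ yields a contribution of $\tfrac{1}{2}v_i(O_i)$. On each lost item $j$ the calibration of $a_i$ forces $p_j$ to be at least the foregone per-item value, and $p_j = \max_{k \neq i} b_k(j) \le \max_k b_k(j)$ can therefore be charged to the winning bid on item $j$ in the original profile $b$; summing over $i$ and $j$ replaces the aggregate of these prices by $\sum_i \E_b[b_i(X_i(b))]$, giving the $\mu_2 = 1$ term. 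No agent's VCG payment is ever charged, so $\mu_1 = 0$, and summing over $i$ yields the smoothness inequality with $(\lambda,\mu_1,\mu_2) = (1/2, 0, 1)$.

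The main obstacle is that subadditive valuations do \emph{not} admit an additive supporting function $\alpha_i \le v_i$ with $\alpha_i(O_i) = v_i(O_i)$ — that luxury exists only in the $\XOS$ case, where halving such a support and bidding $\alpha_i/2$ on $O_i$ is the standard route. The randomized subset sample, combined with the symmetric bound $\E_T[v_i(T)] \ge v_i(O_i)/2$, is what replaces it and is what saves the $O(\log m)$ factor that a naive reduction via $\CF \subseteq O(\log m)\text{-}\XOS$ would cost. The delicate technical point is ensuring throughout that the bid $a_i$ is a function of $v$ and $\mathcal{B}_{-i}$ alone and not of $b_i$; this is what distinguishes \defref{def:smoothness} from the weak smoothness of \cite{SyrgkanisTardos13} and is precisely why the resulting bound extends to coarse correlated equilibria and to no-external-regret learning outcomes.
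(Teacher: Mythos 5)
Your high-level framing is right (a randomized deviation that depends only on $v$ and $\mathcal{B}_{-i}$, derandomized at the end; the ``no additive support'' obstacle for $\CF$), and your closing Lemma, which charges $\sum_i p_i(O_i,b_{-i})$ to $\sum_i b_i(X_i(b))$, matches the paper's Lemma~\ref{lem:two}. But the deviation you construct is \emph{not} the FFGL/paper construction, and as described it does not work.

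The gap is in the core construction. You draw a uniformly random half-subset $T \subseteq O_i$ and then ``bid a threshold value, essentially a suitable scaling of $v_i(T)/|T|$, on each $j \in T$.'' Two problems. First, for subadditive $v_i$ this per-item bid is typically \emph{not conservative}: the equal split $v_i(T)/|T|$ can exceed $v_i(\{j\})$ for individual $j$ (e.g.\ $v_i(\{1,2\})=2,\ v_i(\{1\})=2,\ v_i(\{2\})=0.1$ is subadditive but the split $1$ on item $2$ violates conservativeness). If you scale the bid down until it \emph{is} conservative, you lose exactly the $\Theta(\log m)$ factor you claim to avoid, because the tightest additive lower bound for a subadditive valuation can be off by $\log m$. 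Second, even granting $\E_T[v_i(T)] \ge v_i(O_i)/2$ (which is correct by i.i.d.\ symmetry of $T$ and $O_i\setminus T$), you have no handle on $\E_T[v_i(T\setminus W_i)]$: the lost items are the ones where the price beats your bid, and nothing in the equal-split construction certifies that the forfeited value is covered by the prices paid. You assert this (``the calibration of $a_i$ forces $p_j$ to be at least the foregone per-item value''), but with an equal-split bid there is no such guarantee in the subadditive case.

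The missing idea is the one that actually drives FFGL and the paper's Lemma~\ref{lem:one}: the deviating bid is a \emph{fresh sample of the truncated prices}, not a random subset of $O_i$. Concretely, for each realized $b_{-i}$ one picks a maximal $T\subseteq Q_i$ with $v_i(T)\le p_i(T)$, sets the truncated prices $q_i$ to equal $p_i$ on $Q_i\setminus T$ and $0$ elsewhere, and observes that maximality of $T$ plus subadditivity makes $q_i$ a \emph{conservative additive bid}. Agent $i$ then bids an independent sample from the induced distribution $\mathcal{D}_i$ of these truncated prices. The $1/2$ factor comes not from a random half-subset of $O_i$ but from the symmetry between the two i.i.d.\ draws from $\mathcal{D}_i$: the set $Y_i(b_i,q_i)$ won by the bid against the truncated prices and the set $Y_i(q_i,b_i)$ won with roles swapped partition $Q_i$, so their values average to at least $v_i(Q_i)/2$. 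The truncation step is exactly what replaces the missing additive support and what your random-subset sampling does not supply.
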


To prove this result we need two auxiliary lemmata.

\begin{lemma}\label{lem:one}
Suppose that $V = \CF$, that $B = \OS$, and that the VCG mechanism is used. Then for every agent $i$, every bundle $Q_i$, and every distribution $\mathcal{B}_{-i}$ on the bids $b_{-i}$ of the agents other than $i$ there exists a conservative bid $a_i$ such that
\[
	\E_{b_{-i} \sim \mathcal{B}_{-i}}[u_i((a_i,b_{-i}),v_i)]  \ge \frac{1}{2} \cdot v_i(Q_i) - \E_{b_{-i} \sim \mathcal{B}_{-i}}[p_i(Q_i,b_{-i})]\enspace .
\]
\end{lemma}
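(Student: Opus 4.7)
The plan is to exhibit a conservative (possibly randomized) bid $a_i$ whose expected utility over both $T$ and $b_{-i}$ satisfies the stated inequality, then derandomize by averaging to get a single deterministic bid. Concretely, draw $T \subseteq Q_i$ uniformly at random and let
\[
a_i^T(j) = v_i(\{j\}) \text{ if } j \in T, \qquad a_i^T(j) = 0 \text{ otherwise}.
\]
This is plainly conservative. If the random bid $a_i^T$ satisfies the inequality in expectation, then some realization $T^*$ gives a deterministic $a_i = a_i^{T^*}$ with the same expected performance over $b_{-i}$.

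Fix $b_{-i}$ and, since bids are additive, write $p_j := \max_{k \ne i} b_k(j)$, so that $p_i(Q_i, b_{-i}) = \sum_{j \in Q_i} p_j$. Agent $i$ wins exactly $W_T := \{j \in T : v_i(\{j\}) \ge p_j\}$ and obtains utility $v_i(W_T) - \sum_{j \in W_T} p_j$. The crux of the argument is the deterministic pointwise bound
\[
v_i(W_T) - \sum_{j \in W_T} p_j \;\ge\; v_i(T) - \sum_{j \in T} p_j.
\]
To prove this, apply subadditivity of $v_i$ twice: $v_i(T) \le v_i(W_T) + v_i(T \setminus W_T) \le v_i(W_T) + \sum_{j \in T \setminus W_T} v_i(\{j\})$. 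By definition of $W_T$, every $j \in T \setminus W_T$ satisfies $v_i(\{j\}) \le p_j$, so $v_i(W_T) \ge v_i(T) - \sum_{j \in T \setminus W_T} p_j$; combining with $\sum_{j \in T \setminus W_T} p_j + \sum_{j \in W_T} p_j = \sum_{j \in T} p_j$ gives the display.

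To finish, I would take expectations over $T$ and then over $b_{-i} \sim \mathcal{B}_{-i}$. The inequality $v_i(Q_i) \le v_i(T) + v_i(Q_i \setminus T)$ (subadditivity), combined with the fact that $T$ and $Q_i \setminus T$ are identically distributed, yields the two-line Feige-style bound $\E_T[v_i(T)] \ge \frac{1}{2} v_i(Q_i)$. By linearity, $\E_T[\sum_{j \in T} p_j] = \frac{1}{2} \sum_{j \in Q_i} p_j = \frac{1}{2} p_i(Q_i, b_{-i})$. Putting everything together,
\[
\E_{T, b_{-i}}[u_i((a_i^T, b_{-i}), v_i)] \;\ge\; \frac{1}{2} v_i(Q_i) - \frac{1}{2} \E_{b_{-i}}[p_i(Q_i, b_{-i})] \;\ge\; \frac{1}{2} v_i(Q_i) - \E_{b_{-i}}[p_i(Q_i, b_{-i})],
\]
since the price term is nonnegative, and then derandomization produces the claimed deterministic $a_i$. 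The main obstacle is the two applications of subadditivity that convert $v_i(T)$ into $v_i(W_T)$ while paying only with the prices of the discarded items; once that pointwise bound is in hand, the rest is just symmetry and linearity.
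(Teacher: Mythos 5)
Your proposed deviation bid is not conservative, and this breaks the entire argument. You define $a_i^T$ to be additive with $a_i^T(j) = v_i(\{j\})$ for $j \in T$. Conservativeness requires $a_i^T(S) = \sum_{j \in S \cap T} v_i(\{j\}) \le v_i(S)$ for every $S$, but subadditivity gives the opposite inequality: $v_i(S) \le \sum_{j \in S} v_i(\{j\})$, with equality only if $v_i$ is additive. As a concrete counterexample, take the unit-demand valuation $v_i(S) = 1$ for all nonempty $S$; then for any two items $j,k \in T$ you have $a_i^T(\{j,k\}) = 2 > 1 = v_i(\{j,k\})$. Since the lemma restricts to conservative bids, $a_i^T$ is not an admissible deviation, and you cannot conclude that the expectation over $T$ of a quantity is achieved by some admissible deterministic bid.

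This is not a cosmetic issue. A further warning sign is that your computation lands on $\tfrac12 v_i(Q_i) - \tfrac12\,\E[p_i(Q_i,b_{-i})]$, which is strictly stronger than the lemma's claim and, propagated through Lemma~\ref{lem:two} and Theorem~\ref{thm:cce}, would give relaxed $(1/2,0,1/2)$-smoothness and hence a Price of Anarchy bound of $3$ rather than $4$; the paper does not claim $3$, which is an external signal that the construction is too good to be true. The underlying difficulty is precisely why the paper's proof does not bid singleton valuations but instead draws the deviation $a_i$ from the distribution $\mathcal{D}_i$ of \emph{truncated prices}: the truncation set $T$ is chosen as a maximal $T \subseteq Q_i$ with $v_i(T) \le p_i(T)$, and the maximality argument (via subadditivity) is exactly what is needed to certify that $q_i$ --- and hence the sampled deviation --- is conservative. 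The symmetry between the two draws $b_i \sim \mathcal{D}_i$ and $q_i \sim \mathcal{D}_i$, together with the fact that $Y_i(b_i,q_i)$ and $Y_i(q_i,b_i)$ partition $Q_i$, then plays the role of your Feige-style split, but without ever bidding above value. If you want to salvage your line of attack you would need to cap the per-item bid (e.g., at the price) rather than bid $v_i(\{j\})$, which essentially reconstructs the paper's truncated-price device; bidding singleton valuations as such cannot be made to work.
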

\begin{proof}
Consider bids $b_{-i}$ of the agents $-i$. The bids $b_{-i}$ induce a price $p_i(j) = \max_{k \neq i} b_k(j)$ for each item $j$. Let $T$ be a maximal subset of items from $Q_i$ such that $v_i(T) \le p_i(T)$. Define the {\em truncated} prices $q_i$ as follows:
\begin{align*}
	q_i(j) = \begin{cases}
	               p_i(j) & \text{for $j \in Q_i \setminus T$, and}\\
	               0 & \text{otherwise}.
	               \end{cases}
\end{align*}

The distribution $\mathcal{B}_{-i}$ on the bids $b_{-i}$ induces a distribution $\mathcal{C}_i$ on the prices $p_i$ as well as a distribution $\mathcal{D}_i$ on the truncated prices $q_i.$

We would like to allow agent $i$ to draw his bid $b_i$ from the distribution $\mathcal{D}_i$ on the truncated prices $q_i$. For this we need that (1) the truncated prices are additive and that (2) the truncated prices are conservative. The first condition is satisfied because additive bids lead to additive prices. To see that the second condition is satisfied assume by contradiction that for some set $S \subseteq Q_i \setminus T$, $q_i(S) > v_i(S)$. As $p_i(S) = q_i(S)$ it follows that
\[
	v_i(S \cup T) \le v_i(S) + v_i(T) \le p_i(S) + p_i(T) = p_i(S \cup T),
\]
which contradicts our definition of $T$ as a maximal subset of $Q_i$ for which $v_i(T) \le p_i(T).$

Consider an arbitrary bid $b_i$ from the support of $\mathcal{D}_i$. Let $X_i(b_i,p_i)$ be the set of items won with bid $b_i$ against prices $p_i$. Let $Y_i(b_i,q_i)$ be the subset of items from $Q_i$ won with bid $b_i$ against the truncated prices $q_i$. As $p_i(j) = q_i(j)$ for $j \in Q_i \setminus T$ and $p_i(j) \ge q_i(j)$ for $j \in T$ we have $Y_i(b_i,q_i) \subseteq X_i(b_i,p_i) \cup T$. Thus, using the fact that $v_i$ is subadditive, $v_i(Y_i(b_i,q_i)) \le v_i(X_i(b_i,p_i)) + v_i(T).$ By the definition of the prices $p_i$ and the truncated prices $q_i$ we have $p_i(Q_i) - q_i(Q_i) = p_i(T) \ge v_i(T).$ By combining these inequalities we obtain 
%$v_i(X_i(b_i,p_i)) + p_i(Q_i) \ge v_i(Y_i(b_i,q_i)) + q_i(Q_i)$. 
\begin{align*}
v_i(X_i(b_i,p_i)) + p_i(Q_i) \ge v_i(Y_i(b_i,q_i)) + q_i(Q_i).
\end{align*}
Taking expectations over the prices $p_i \sim \mathcal{C}_i$ and the truncated prices $q_i \sim \mathcal{D}_i$ gives
\begin{align*}
	\E_{p_i \sim \mathcal{C}_i}[v_i(X_i(b_i,p_i)) + p_i(Q_i)]
	\ge \E_{q_i \sim \mathcal{D}_i}[v_i(Y_i(b_i,q_i)) + q_i(Q_i)].
\end{align*}

Next we take expectations over $b_i \sim \mathcal{D}_i$ on both sides of the inequality. Then we bring the $p_i(Q_i)$ term to the right and the $q_i(Q_i)$ term to the left. Finally, we exploit that the expectation over $q_i \sim \mathcal{D}_{i}$ of $q_i(Q_i)$ is the same as the expectation over $b_i \sim \mathcal{D}_{i}$ of $b_i(Q_i)$ to obtain
\begin{align}
	&\E_{b_i \sim \mathcal{D}_i}[\E_{p_i \sim \mathcal{C}_i}[v_i(X_i(b_i,p_i))]] - \E_{b_i \sim \mathcal{D}_i}[b_i(Q_i)] 
	\ge \E_{b_i \sim \mathcal{D}_i}[\E_{q_i \sim \mathcal{D}_i}[v_i(Y_i(b_i,q_i))]] - \E_{p_i \sim \mathcal{C}_i}[p_i(Q_i)]
	\label{eq:monkey}
\end{align}

Now, using the fact that $b_i$ and $q_i$ are drawn from the same distribution $\mathcal{D}_i$, we can lower bound the first term on the right-hand side of the preceding inequality by
\begin{align}
	\E_{b_i \sim \mathcal{D}_i}[\E_{q_i \sim \mathcal{D}_i}[v_i(Y_i(b_i,q_i)]]
	%&= \frac{1}{2} \cdot \left(\E_{b_i \sim \mathcal{D}_i}[\E_{q_i \sim \mathcal{D}_i}[v_i(Y_i(b_i,q_i))]] + \E_{q_i \sim \mathcal{D}_i}[\E_{b_i \sim \mathcal{D}_i}[v_i(Y_i(q_i,b_i))]]\right)\notag\\
	%&= \frac{1}{2} \cdot \left(\E_{b_i \sim \mathcal{D}_i}[\E_{q_i \sim \mathcal{D}_i}[v_i(Y_i(b_i,q_i))]]+\E_{b_i \sim \bar{\mathcal{B}}_i}[\E_{q_i \sim \mathcal{D}_i}[v_i(Y_i(q_i,b_i))]]\right)\notag\\
	= \frac{1}{2} \cdot \E_{b_i \sim \mathcal{D}_i}[\E_{q_i \sim \mathcal{D}_i}[v_i(Y_i(b_i,q_i)) + v_i(Y_i(q_i,b_i))]] 
        \ge \frac{1}{2} \cdot v_i(Q_i),
	\label{eq:banana}
\end{align}
where the inequality in the last step comes from the fact that the subset $Y_i(b_i,q_i)$ of $Q_i$ won with bid $b_i$ against prices $q_i$ and the subset $Y_i(q_i,b_i)$ of $Q_i$ won with bid $q_i$ against prices $b_i$ form a partition of $Q_i$ and, thus, because $v_i$ is subadditive, it must be that $v_i(Y_i(b_i,q_i))+v_i(Y_i(q_i,b_i)) \ge v_i(Q_i).$

Note that agent $i$'s utility for bid $b_i$ against bids $b_{-i}$ is given by his valuation for the set of items $X_i(b_i,p_i)$ minus the price $p_i(X_i(b_i,p_i))$. Note further that the price $p_i(X_i(b_i,p_i))$ that he faces is at most his bid $b_i(X_i(b_i,p_i))$. Finally note that his bid $b_i(X_i(b_i,p_i))$ is at most $b_i(Q_i)$ because $b_i$ is drawn from $\mathcal{D}_i$. Together with inequality (\ref{eq:monkey}) and inequality (\ref{eq:banana}) this shows that
\begin{align*}
	\E_{b_i \sim \mathcal{D}_i} [\E_{b_{-i} \sim \mathcal{B}_{-i}}[u_i((b_i,b_{-i}),v_i)]] %\notag\\
	%&\ge \E_{b_i \sim \mathcal{D}_i} \E_{p_i \sim \mathcal{C}_{i}}[v_i(X_i(b_i,p_i)) - p_i(X_i(b_i,p_i))] \notag\\
	%&\ge \E_{b_i \sim \mathcal{D}_i} \E_{p_i \sim \mathcal{C}_{i}}[v_i(X_i(b_i,p_i)) - b_i(X_i(b_i,p_i))] \notag\\
	\ge \E_{b_i \sim \mathcal{D}_i} [\E_{p_i \sim \mathcal{C}_{i}}[v_i(X_i(b_i,p_i)) - b_i(Q_i)]]
	\ge \frac{1}{2} \cdot  v_i(Q_i) - \E_{p_i \sim \mathcal{C}_i}[p_i(Q_i)].
\end{align*}

Since this inequality is satisfied in expectation if bid $b_i$ is drawn from distribution $\mathcal{D}_i$ there must be a bid $a_i$ from the support of $\mathcal{D}_i$ that satisfies it.
\end{proof}

\begin{lemma}\label{lem:two}
Suppose that $V = \CF$, that $B = \OS$, and that the VCG mechanism is used. Then for every partition $Q_1,\dots,Q_n$ of the items and all bids $b$, %%$\sum_{i\in N} p_i(Q_i,b_{-i}) \le \sum_{i \in N} b_i(X_i(b)).$
\[
	\sum_{i\in N} p_i(Q_i,b_{-i}) \le \sum_{i \in N} b_i(X_i(b)).
\]
\end{lemma}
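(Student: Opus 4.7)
The plan is to exploit two special features of the additive-bid case, already highlighted just after the definition of the VCG mechanism in the preliminaries: (i) additive bids induce additive VCG prices, and (ii) in the VCG mechanism with additive bids the efficient allocation assigns each item to an agent with a maximal bid on it. Once those are in hand, the inequality reduces to the pointwise fact that dropping one index from a maximum can only decrease its value.

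More concretely, first I would rewrite the left-hand side item by item. Since $B = \OS$ we have $p_i(Q_i, b_{-i}) = \sum_{j \in Q_i} \max_{k \neq i} b_k(j)$, and since $Q_1, \dots, Q_n$ is a partition of $M$, summing over $i$ yields
\[
	\sum_{i \in N} p_i(Q_i, b_{-i}) = \sum_{j \in M} \max_{k \neq i(j)} b_k(j),
\]
where $i(j)$ denotes the unique index with $j \in Q_{i(j)}$. Bounding $\max_{k \neq i(j)} b_k(j) \le \max_k b_k(j)$ for each item $j$ immediately gives
\[
	\sum_{i \in N} p_i(Q_i, b_{-i}) \le \sum_{j \in M} \max_{k \in N} b_k(j).
\]

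Next I would handle the right-hand side. Because the VCG allocation rule is efficient, $\sum_{i \in N} b_i(X_i(b)) = \max_{X \in \mathcal{P}(M)} \sum_{i \in N} b_i(X_i)$, and with additive bids this maximum is attained by sending each item to a highest bidder, so $\sum_{i \in N} b_i(X_i(b)) = \sum_{j \in M} \max_{k \in N} b_k(j)$. Combining the two displays closes the proof.

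There is no real obstacle: the lemma is essentially a bookkeeping statement, and the only substantive ingredients are the additivity of VCG prices on additive bids and the efficiency of the allocation rule, both of which are explicitly recorded earlier in the preliminaries. The only minor subtlety is making sure that ties in the tie-breaking of the efficient allocation are handled consistently, but this does not affect either equality above.
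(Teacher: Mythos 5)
Your proof is correct and follows essentially the same approach as the paper: decompose both sides into per-item quantities using additivity of bids and prices, bound the ``max over $k \ne i$'' price by the global max bid, and invoke efficiency of the VCG allocation to identify $\sum_{i} b_i(X_i(b))$ with $\sum_{j \in M} \max_k b_k(j)$. You are a bit more explicit than the paper about the final step (that with additive bids the efficient allocation assigns each item to a top bidder), which is a welcome clarification but not a different route.
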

\begin{proof}
For every agent $i$ and each item $j \in Q_i$ we have $p_i(j,b_{-i}) = \max_{k \neq i} b_k(j)\linebreak \le \max_{k} b_k(j)$. Hence an upper bound on the sum $\sum_{i\in N} p_i(Q_i,b_{-i})$ is given by $\sum_{i \in N} \max_k b_k(j)$. The VCG mechanisms selects allocation $X_1(b),\dots,X_n(b)$ such that $\sum_{i \in N} b_i(X_i(b))$ is maximized. The claim follows.
\end{proof}

\begin{proof}[Proof of \propref{prop:additive}]
The claim follows by applying \lemref{lem:one} to every agent~$i$ and the corresponding optimal bundle $O_i$, summing over all agents~$i$, and using \lemref{lem:two} to bound $\E_{b_{-i}\sim\mathcal{B}_{-i}}[\sum_{i \in N} p_i(O_i,b_{-i})]$ by $\E_{b\sim\mathcal{B}}[\sum_{i \in N} b_i(X_i(b))]$.
\end{proof}

An important observation is that the proof of the previous proposition requires that the class of price functions, which is induced by the class of bidding functions via the formula for the VCG payments, is contained in $B$. While this is the case for additive bids that lead to additive (or ``per item'') prices this is {\em not} the case for more expressive bids. In fact, as we will see in the next section, even if the bids are from OXS, the least general class from the hierarchy of \cite{LehmannLehmannNisan05} that strictly contains the class of additive bids, then the class of price functions that is induced by $B$ is no longer contained in $B$. This shows that the techniques that led to the results in this section {\em cannot} be applied to the more expressive bids that we study next.

% % % % 

%% section: a lower bound for pne with non-additive bids
\section{A Lower Bound for PNE with Non-Additive Bids}\label{sec:separation}

We start our analysis of non-additive bids with the following separation result: While for restrictions from subadditive valuations to additive bids the bound is $2$ for pure Nash equilibria \cite{BhawalkarRoughgarden11}, we show that for restrictions from subadditive valuations to OXS bids the corresponding bound is at least $2.4$. This shows that more expressiveness can lead to strictly worse bounds.

\begin{theorem}\label{thm:separation}
Suppose that $V = \CF$, that $\OXS \subseteq B \subseteq \XOS$, and that the VCG mechanism is used. Then for every $\delta > 0$ there exist valuations $v$ such that the PoA with respect to PNE under conservative bidding is at least $2.4 - \delta$.
\end{theorem}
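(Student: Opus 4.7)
The plan is to construct, for every $\delta > 0$, an explicit combinatorial auction instance with subadditive valuations together with a conservative PNE in OXS bids whose social welfare is a factor $2.4 - \delta$ below the optimum. I expect the construction to be a family parameterized by some integer $k = k(\delta)$, with the PoA ratio approaching $12/5 = 2.4$ from below as $k \to \infty$.

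First, I would fix the parameter $k$ and define the instance: a small number of bidders (likely one ``big'' subadditive bidder and several ``small'' OXS-type bidders) together with $m = m(k)$ items. The valuations should be chosen so that (a) the welfare-optimal allocation concentrates the items in the hands of the big bidder, achieving some aggregate value $V$, while (b) a specific suboptimal allocation spreading the items among the small bidders achieves value approximately $5V/12$. The valuation of the big bidder will be genuinely subadditive (not XOS), since Theorem~5.2 of \cite{BhawalkarRoughgarden11} already rules out going beyond $2$ when $V = \XOS$.

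Second, I would specify the candidate bid profile $b^*$: each small bidder submits an OXS bid that truthfully represents (or carefully under-reports) its matching-like valuation, and the big bidder submits a low or even zero bid. The key point, and the reason OXS enables a worse bound than additive bids, is that OXS bids induce \emph{non-additive} VCG prices. This is precisely the phenomenon noted at the end of Section~5: once $B$ strictly contains \OS, the price functions $p_i(\cdot,b_{-i})$ induced by the mechanism escape the class $B$ and behave superadditively on the right bundles. I would exploit this to make the price of every ``large'' bundle strictly larger than the big bidder's subadditive valuation for it.

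Third, and this is the main obstacle, I need to verify that $b^*$ is a PNE under conservative bidding. For each small bidder a short local argument should suffice, since its OXS valuation matches its bid's matching structure. The hard case is the big bidder: the space of conservative deviations is large, and for every bundle $S$ that the big bidder could win under some deviation I must show $v_i(S) \le p_i(S, b^*_{-i})$. This requires a combinatorial lemma relating the induced VCG price of any bundle $S$ (which equals the loss in max-weight matching on the OXS side when $S$ is removed) to $v_i(S)$, using the subadditivity of $v_i$ and the precise numerics of the construction. This is also where the constant $12/5$ will emerge: the tight instance is calibrated so that the worst deviation for the big bidder has zero surplus and the resulting welfare ratio is $12/5$.

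Finally, I would compute $\OPT(v)/\SW(b^*)$ on the constructed family and show it tends to $2.4$ as $k \to \infty$, completing the proof for any prescribed $\delta > 0$. Note that, by the outcome closure property mentioned in the introduction, it suffices to exhibit the PNE in OXS; since $\OXS \subseteq B \subseteq \XOS$, the same equilibrium persists (or is weakly matched by an equally bad one) for every admissible $B$, giving the uniform lower bound claimed in the theorem.
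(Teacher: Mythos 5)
Your high-level diagnosis is correct on two important points: the target constant is indeed $12/5$, and the essential phenomenon is that OXS bids produce VCG prices that are superadditive on the right bundles, a price shape unavailable in the additive case. But the proposal stops short of an actual construction, and the shape you imagine for it does not match how the argument actually goes.

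First, the paper's instance is a single, fixed, two-agent, six-item example, not a family parameterized by an integer $k \to \infty$. The items are split into two triples $X_1, X_2$. The two agents are \emph{symmetric}: each agent $i$ values its ``own'' triple $X_i$ at $12$ (and any nonempty proper subset at $6$), and values the other triple $X_{i+1}$ at $5+\epsilon$, $4+2\epsilon$, $3+3\epsilon$ for $3$, $2$, $1$ items respectively, taking maxima across the two parts for mixed sets. The $\delta$ in the theorem is absorbed by the small $\epsilon > 0$, not by growing the instance. Your hypothesized structure (one ``big'' subadditive bidder whose optimal bundle is a large set, plus several ``small'' OXS-type bidders, a $k$-indexed family) is not what is needed and there is no evidence it would hit $12/5$; the tight calibration relies heavily on the symmetric two-triple arithmetic. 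Second, the equilibrium bid profile is also symmetric: \emph{both} agents bid zero on their own triple and bid the cross-values $5+\epsilon,4+2\epsilon,3+3\epsilon$ on the other agent's triple, which is expressible as an OR of (an XOR of three high XS bids) with three low XS bids. This is quite different from ``small bidders bid truthfully, big bidder bids near zero.'' Third, the heart of the PNE verification, which you flag as the main obstacle, is handled by a concrete combinatorial lemma about any $b_i \in \XOS$: $\max_{S\subseteq X, |S|=|X|-1} b_i(S) \ge \frac{|X|-1}{|X|} b_i(X)$, applied with $|X|=3$. This pins down that a conservative $\XOS$ deviation aimed at winning all of $X_i$ must satisfy both $a_i(X_i)\le 9$ and $a_i(X_i)\ge 9(1+\epsilon)$ (using the superadditive prices $1-\epsilon$, $2-2\epsilon$, $5+\epsilon$ on subsets of $X_i$ of size $1,2,3$), a contradiction. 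Without this lemma and the specific numerics, the ``combinatorial lemma relating the induced VCG price to $v_i(S)$'' you postulate remains a placeholder. Finally, a minor point: the paper does not need to invoke outcome closure here, because it rules out deviations directly over the entire class $\XOS$; your outcome-closure shortcut is not wrong, but it is also not load-bearing, since the deviation analysis has to be carried out over $\XOS$ one way or another. In summary, the proposal identifies the right target and the right qualitative mechanism, but the actual construction, the symmetric two-agent structure, the bid profile, and the specific $\XOS$ lemma that makes the PNE argument go through are all missing, and the structural guess you offer in their place points in the wrong direction.
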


The proof of this theorem makes use of the following auxiliary lemma, whose proof is deferred to \appref{app:subadditive}.

\begin{lemma}\label{lem:subadditive}
If $b_i \in \XOS$, then, for any $X \subseteq M$, 
%%$\max_{S \subseteq X, |S| = |X|-1} b_i(S) \ge (|X|-1)/|X| \cdot b_i(X).$ 
\[
	\max_{S \subseteq X, |S| = |X|-1} b_i(S) \ge \frac{|X|-1}{|X|} \cdot b_i(X)\enspace .
\]
\end{lemma}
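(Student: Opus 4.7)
The plan is to exploit the XOS representation of $b_i$ to obtain an additive function that dominates $b_i$ on every subset of $X$ while agreeing with it on $X$ itself, and then argue by averaging over single-element deletions.

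First I would unpack the definition of XOS given in the preliminaries: $b_i$ can be written as $b_i(T) = \max_k a_k(T)$ where each $a_k$ is an additive (OR of XS) valuation. Equivalently, one recovers the $\beta = 1$ case of the fractionally subadditive definition stated in the paper: for the particular set $X$ there exists an additive $a$ such that $a(X) \ge b_i(X)$ and $a(S) \le b_i(S)$ for every $S \subseteq X$. Concretely, taking $a = a_{k^\ast}$ for the index $k^\ast$ that achieves the maximum at $X$ gives $a(X) = b_i(X)$ and $a(S) \le \max_k a_k(S) = b_i(S)$ for every $S \subseteq X$.

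Next I would apply this supporting additive valuation to each of the $|X|$ subsets obtained by deleting a single element. For every $j \in X$, the XOS inequality gives
\[
b_i(X \setminus \{j\}) \ge a(X \setminus \{j\}) = a(X) - a(j) = b_i(X) - a(j).
\]
Summing these $|X|$ inequalities and using $\sum_{j \in X} a(j) = a(X) = b_i(X)$ yields
\[
\sum_{j \in X} b_i(X \setminus \{j\}) \ge |X| \cdot b_i(X) - b_i(X) = (|X|-1)\, b_i(X).
\]
Dividing by $|X|$ and observing that the maximum over $j$ of $b_i(X \setminus \{j\})$ is at least the average then delivers the claimed bound.

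No step looks genuinely difficult here; the only subtlety is making sure to use the XOS structure in the right direction, namely that the additive bound $a$ attached to the set $X$ is allowed to underestimate $b_i$ on proper subsets of $X$ (which is exactly what is needed, since we want a lower bound on $b_i(X \setminus \{j\})$). Once that is in place, the rest is a one-line averaging argument.
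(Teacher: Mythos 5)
Your proof is correct and is essentially the same as the paper's: both extract the supporting additive valuation $a$ for $X$ from the XOS representation (so $a(X)=b_i(X)$ and $a(S)\le b_i(S)$ for $S\subseteq X$), sum the lower bounds over the $|X|$ subsets of size $|X|-1$ using that each item is omitted exactly once, and finish by comparing the maximum to the average. The only cosmetic difference is that you make the single-element deletion $b_i(X\setminus\{j\})\ge a(X)-a(j)$ explicit, whereas the paper states the counting argument directly.
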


\begin{proof}[Proof of \thmref{thm:separation}]
There are $2$ agents and $6$ items. The items are divided into two sets $X_1$ and $X_2$, each with $3$ items. The valuations of agent $i \in \{1,2\}$ are given by (all indices are modulo two)
\begin{align*}
v_i(S) = \begin{cases}
	 12&\text{for $S \subseteq X_i$, $|S| = 3$}\\
	 6 &\text{for $S \subseteq X_i$, $1\leq|S| \le 2$}\\
	 5 + 1 \epsilon &\text{for $S \subseteq X_{i+1}$, $|S| = 3$}\\
	 4 + 2 \epsilon &\text{for $S \subseteq X_{i+1}$, $|S| = 2$}\\
	 3 + 3 \epsilon &\text{for $S \subseteq X_{i+1}$, $|S| = 1$}\\
	 \max_{j \in \{1,2\}}\{v_i(S\cap X_j)\}&\text{otherwise.}
	 \end{cases}
\end{align*}
The variable $\epsilon$ is a sufficiently small positive number. The valuation $v_i$ of agent $i$ is subadditive, but not fractionally subadditive. (The problem for agent $i$ is that the valuation for $X_i$ is too high given the valuations for $S \subset X_i$.)

The welfare maximizing allocation awards set $X_1$ to agent $1$ and set $X_2$ to agent $2$. The resulting welfare is $v_1(X_1)+v_2(X_2) = 12+12 = 24.$ 

We claim that the following bids $b = (b_1,b_2)$ are contained in $\OXS$ and constitutes a pure Nash equilibrium:
\begin{align*}
b_i(S) = \begin{cases}
	 0&\text{for $S \subseteq X_i$}\\
	 5 + 1 \epsilon &\text{for $S \subseteq X_{i+1}$, $|S| = 3$}\\
	 4 + 2 \epsilon &\text{for $S \subseteq X_{i+1}$, $|S| = 2$}\\
	 3 + 3 \epsilon &\text{for $S \subseteq X_{i+1}$, $|S| = 1$}\\
	 \max_{j \in \{1,2\}}\{b_i(S\cap X_j)\}&\text{otherwise.}
	 \end{cases}
\end{align*}

Given $b$ VCG awards set $X_2$ to agent $1$ and set $X_2$ to agent $2$ for a welfare of $v_1(X_2)+v_2(X_1) =  2 \cdot (5 + \epsilon)= 10 + 2 \epsilon$, which is by a factor $2.4-12\epsilon/(25+5\epsilon)$ smaller than the optimum welfare.

We can express $b_i$ as ORs of XORs of XS bids as follows: Let $X_i = \{a,b,c\}$ and $X_{i+1} = \{d,e,f\}$. Let $h_d, h_e, h_f$ and $\ell_d, \ell_e, \ell_f$ be $XS$ bids that value $d$, $e$, $f$ at $3+3\epsilon$ and $1-\epsilon$, respectively. Then $b_i(T) = (h_d(T) \otimes h_e(T) \otimes h_f(T)) \vee \ell_d(T) \vee \ell_e(T) \vee \ell_f(T)$.

To show that $b$ is a Nash equilibrium we can focus on agent $i$ (by symmetry) and on deviating bids $a_i$ that win agent $i$ a subset $S$ of $X_i$ (because agent $i$ currently wins $X_{i+1}$ and $v_i(S) = \max\{v_i(S\cap X_1),v_i(S\cap X_2)\}$ for sets $S$ that intersect both $X_1$ and $X_2$).

Note that the price that agent $i$ faces on the subsets $S$ of $X_i$ are superadditive: For $|S| = 1$ the price is $(5+\epsilon) - (4+2\epsilon) = 1-\epsilon$, for $|S| = 2$ the price is $(5+\epsilon) - (3+3\epsilon) = 2-2\epsilon$, and for $|S| = 3$ the price is $5+\epsilon.$

{\em Case 1:} $S = X_i$. 
We claim that this case cannot occur. To see this observe that because $a_i \in \XOS$, Lemma~\ref{lem:subadditive} shows that there must be a 2-element subset $T$ of $S$ for which $a_i(T) \ge 2/3 \cdot a_i(S)$. On the one hand this shows that $a_i(S) \le 9$ because otherwise $a_i(T) \ge 2/3 \cdot a_i(S) > 6$ in contradiction to our assumption that $a_i$ is conservative. On the other hand to ensure that VCG assigns $S$ to agent $i$ we must have $a_i(S) \ge a_i(T) + (3+3\epsilon)$
due to the subadditivity of the prices. Thus $a_i(S) \ge 2/3 \cdot a_i(S) + (3 + 3 \epsilon)$ and, hence, $a_i(S) \ge 9(1+\epsilon)$. We conclude that $9 \ge a_i(S) \ge 9(1+\epsilon)$, which gives a contradiction.

{\em Case 2:} $S \subset X_i$.
In this case agent $i$'s valuation for $S$ is $6$ and his payment is at least $1 - \epsilon$ as we have shown above. Thus, $u_i(a_i,b_{-i}) \leq 5 + \epsilon = u_i(b_i,b_{-i})$, i.e., the utility does not increase with the deviation.
\end{proof}

%% section: upper bounds for cce and minimization of external regret for non-additive bids
\section{Upper Bounds for CCE and Minimization of External Regret for Non-Additive Bids}\label{sec:non-additive}

Our next group of results concerns upper bounds for the PoA for restrictions to non-additive bids. For $\beta$-fractionally subadditive valuations we show that the VCG mechanism is relaxed $(1/\beta,1,1)$-smooth. By \thmref{thm:cce} this implies that the Price of Anarchy with respect to coarse correlated equilibria is at most $2\beta$. By \thmref{thm:learning} this implies that the average social welfare obtained in sequences of repeated play with vanishing external regret converges to $1/(2\beta)$ of the optimal social welfare. For subadditive valuations, which are $O(\log(m))$-fractionally subadditive, we thus obtain bounds of $O(\log(m))$ resp.~$\Omega(1/\log(m))$. For fractionally subadditive valuations, which are $1$-fractionally subadditive, we thus obtain bounds of $2$ resp.~$1/2$. We thus extend the results of \cite{ChristodoulouKovacsSchapira08,BhawalkarRoughgarden11} from additive to non-additive bids.

%% proposition: prop:additive
\begin{proposition}\label{prop:nonadditive}
Suppose that $V \subseteq \beta\text{-}XOS$ and that $\OS \subseteq B \subseteq \XOS$, then the VCG mechanism is relaxed $(1/\beta,1,1)$-smooth under conservative bidding.
\end{proposition}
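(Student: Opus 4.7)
The plan is to mimic Proposition~\ref{prop:additive} but with a deterministic deviation tailored to non-additive bids. For each agent $i$, I would take the additive supporting function $a^{O_i}$ of $v_i$ on $O_i$ guaranteed by $v_i \in \beta\text{-}\XOS$ (satisfying $a^{O_i}(O_i) \ge v_i(O_i)/\beta$ and $a^{O_i}(S) \le v_i(S)$ for all $S \subseteq O_i$), and set the deviation $a_i$ to be the additive bid equal to $a^{O_i}(k)$ on each $k \in O_i$ and zero elsewhere. Since $a_i$ is additive, $a_i \in \OS \subseteq B$, and $a_i(S) = a^{O_i}(S \cap O_i) \le v_i(S \cap O_i) \le v_i(S)$ by monotonicity, so $a_i$ is conservative. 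Because $a_i$ depends only on $v$ and not on the bids, every bound proved pointwise in $b$ passes to expectation over $\mathcal{B}$.

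First I would lower bound $i$'s utility under the deviation. Writing $W_i = X_i(a_i, b_{-i})$ for what agent $i$ wins, I chain conservativeness ($v_i(W_i) \ge a_i(W_i)$), the VCG utility-optimality of $W_i$ for a bidder with additive value $a_i$ (yielding $a_i(W_i) - p_i(W_i, b_{-i}) \ge a_i(O_i) - p_i(O_i, b_{-i})$), and $a_i(O_i) \ge v_i(O_i)/\beta$, to obtain $u_i((a_i, b_{-i}), v_i) \ge v_i(O_i)/\beta - p_i(O_i, b_{-i})$. Summing over $i$ and taking expectations yields $\sum_i \E_{b_{-i} \sim \mathcal{B}_{-i}}[u_i((a_i, b_{-i}), v_i)] \ge \OPT(v)/\beta - \sum_i \E_{b \sim \mathcal{B}}[p_i(O_i, b_{-i})]$.

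The main obstacle is the non-additive analogue of \lemref{lem:two}, namely $\sum_i p_i(O_i, b_{-i}) \le \sum_i p_i(X_i(b), b_{-i}) + \sum_i b_i(X_i(b))$ pointwise in $b$. Substituting $p_i(S, b_{-i}) = b_{-i}(M) - b_{-i}(M \setminus S)$ and using the VCG-efficiency identity $b_{-i}(M \setminus X_i(b)) = \sum_{j \ne i} b_j(X_j(b))$, a short rearrangement reduces this to the single claim
\[
(n-2)\sum_j b_j(X_j(b)) \;\le\; \sum_i b_{-i}(M \setminus O_i).
\]
The item-by-item counting underlying \lemref{lem:two} fails here because non-additive bids induce non-additive prices, so fresh input is needed.

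To push this through, I would invoke the $\XOS$ structure of each bid. Every $b_j \in \XOS$ is $1$-fractionally subadditive, so there exists a non-negative additive $a^*_j$ with $a^*_j(X_j(b)) = b_j(X_j(b))$ and $a^*_j(S) \le b_j(S)$ for every $S \subseteq X_j(b)$. For each $i$, consider the partition of $M \setminus O_i$ that gives $X_j(b) \setminus O_i$ to each $j \ne i$, dumping the orphan items $X_i(b) \setminus O_i$ onto any fixed $j^* \ne i$; by monotonicity of $b_{j^*}$ this yields $b_{-i}(M \setminus O_i) \ge \sum_{j \ne i} b_j(X_j(b) \setminus O_i) \ge \sum_{j \ne i} a^*_j(X_j(b) \setminus O_i)$. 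Summing over $i$, swapping the order of summation, and using additivity of $a^*_j$ together with the fact that $\{X_j(b) \cap O_i\}_i$ partitions $X_j(b)$ collapses the double sum to $(n-2)\sum_j b_j(X_j(b)) + \sum_j a^*_j(X_j(b) \cap O_j)$, which is at least $(n-2)\sum_j b_j(X_j(b))$ by non-negativity of $a^*_j$. Combining with the utility bound yields the relaxed $(1/\beta, 1, 1)$-smoothness.
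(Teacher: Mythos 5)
Your proposal is correct and follows essentially the same route as the paper: your utility bound for the additive deviation $a_i$ supported by $\beta$-$\XOS$ is the paper's Lemma~\ref{lem:tight}, and your price inequality $\sum_i p_i(O_i,b_{-i}) \le \sum_i p_i(X_i(b),b_{-i}) + \sum_i b_i(X_i(b))$ together with the XOS-supporting-additive counting argument is the paper's Lemma~\ref{lem:prices}, just with the $(n-2)$-versus-$(n-1)$ bookkeeping arranged slightly differently.
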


We will prove that the VCG mechanism satisfies the definition of relaxed smoothness point-wise. For this we need two auxiliary lemmata.

%% lemma: tight
\begin{lemma}\label{lem:tight}
Suppose that $V \subseteq \beta\text{-}\XOS$, that $\OS \subseteq B \subseteq \XOS$, and that the VCG mechanism is used. Then for all valuations $v \in V$, every agent $i$, and every bundle of items $Q_i \subseteq M$ there exists a conservative bid $a_i \in B_i$ such that for all conservative bids $b_{-i} \in B_{-i}$,
\[
	u_i(a_i,b_{-i},v_i) \ge \frac{v_i(Q_i)}{\beta} - p_i(Q_i,b_{-i}).
\]
\end{lemma}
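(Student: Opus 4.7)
The plan is to exhibit an explicit additive deviation bid $a_i$ built from the XOS certificate of $v_i$ on $Q_i$, and then invoke the maximality property of VCG to translate the bound from the (random) VCG allocation $X_i$ back to the reference bundle $Q_i$.

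First I would use the hypothesis $v_i \in \beta\text{-}\XOS$ applied to the bundle $T = Q_i$: there exists an additive function $\alpha_i: M \to \reals_{\ge 0}$ (supported on $Q_i$) with $\sum_{j \in Q_i} \alpha_i(j) \ge v_i(Q_i)/\beta$ and $\sum_{j \in S} \alpha_i(j) \le v_i(S)$ for all $S \subseteq Q_i$. I would then define $a_i$ as the additive bid with $a_i(\{j\}) = \alpha_i(j)$ for $j \in Q_i$ and $a_i(\{j\}) = 0$ for $j \notin Q_i$. Since $\OS \subseteq B$ this lies in the bidding class, and conservativeness holds because, for any $S \subseteq M$, monotonicity of $v_i$ and the XOS supporting inequality on $S \cap Q_i$ give $a_i(S) = \sum_{j \in S \cap Q_i} \alpha_i(j) \le v_i(S \cap Q_i) \le v_i(S)$.

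Next I would exploit the efficiency of VCG under the bid profile $(a_i, b_{-i})$. Let $X_i = f_i(a_i, b_{-i})$ be the bundle VCG allocates to agent $i$. Writing $p_i(S, b_{-i}) = b_{-i}(M) - b_{-i}(M \setminus S)$ and using that $f$ maximizes $\sum_j b_j(X_j)$, one obtains the standard VCG inequality
\begin{equation*}
a_i(X_i) - p_i(X_i, b_{-i}) \;\ge\; a_i(Q_i) - p_i(Q_i, b_{-i}),
\end{equation*}
i.e.\ agent $i$'s bid-minus-price value at the chosen allocation dominates the counterfactual in which he is handed $Q_i$. Combining this with conservativeness, $v_i(X_i) \ge a_i(X_i)$, yields
\begin{equation*}
u_i(a_i,b_{-i},v_i) = v_i(X_i) - p_i(X_i,b_{-i}) \;\ge\; a_i(Q_i) - p_i(Q_i,b_{-i}).
\end{equation*}

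Finally I would plug in the XOS lower bound $a_i(Q_i) = \sum_{j \in Q_i} \alpha_i(j) \ge v_i(Q_i)/\beta$ to conclude $u_i(a_i,b_{-i},v_i) \ge v_i(Q_i)/\beta - p_i(Q_i,b_{-i})$, as required. I do not expect a real obstacle here: the only points demanding care are (i) checking that $a_i$ is conservative globally rather than only on subsets of $Q_i$ (handled by monotonicity of $v_i$), and (ii) noting that the VCG maximization step works for \emph{any} conservative opponent bids $b_{-i} \in B_{-i}$, which is exactly what makes the bound point-wise in $b_{-i}$ and therefore usable in the subsequent proof of Proposition~\ref{prop:nonadditive}.
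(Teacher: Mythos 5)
Your proof is correct and follows essentially the same route as the paper: take the additive supporting bid from the $\beta$-$\XOS$ definition on $Q_i$ (zero outside $Q_i$), verify conservativeness, apply the VCG bid-maximization inequality comparing the chosen allocation to the one handing $i$ the bundle $Q_i$, and then substitute the two $\XOS$ bounds. The only cosmetic difference is that you spell out the global conservativeness check ($a_i(S) \le v_i(S \cap Q_i) \le v_i(S)$) and state the VCG comparison directly in bid-minus-price form, whereas the paper adds $b_{-i}(M\setminus\cdot)$ terms and subtracts $b_{-i}(M)$ at the end, but these are the same computation.
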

\begin{proof}
Fix valuations $v$, agent $i$, and bundle $Q_i$. As $v_i \in \beta\text{-}\XOS$ there exists a conservative, additive bid $a_i \in \OS$ such that $\sum_{j \in X_i} a_i(j) \le v_i(X_i)$ for all $X_i \subseteq Q_i$, and $\sum_{j \in Q_i} a_i(j) \ge \frac{v_i(Q_i)}{\beta}$.
Consider conservative bids $b_{-i}$. Suppose that for bids $(a_i,b_{-i})$ agent $i$ wins items $X_i$ and agents $-i$ win items $M\setminus X_i$. As VCG selects outcome that maximizes the sum of the bids, 
\begin{align*}
	a_i(X_i) + b_{-i}(M\setminus X_i) &\ge a_i(Q_i) + b_{-i}(M\setminus Q_i).
\end{align*}
We have chosen $a_i$ such that $a_i(X_i) \le v_i(X_i)$ and $a_i(Q_i) \ge v_i(Q_i)/\beta$. Thus,
\begin{align*}
	v_i(X_i) + b_{-i}(M\setminus X_i) 
	&\ge a_i(X_i) + b_{-i}(M\setminus X_i)
	\ge a_i(Q_i) + b_{-i}(M\setminus Q_i) \ge \frac{v_i(Q_i)}{\beta} + b_{-i}(M\setminus Q_i).
\end{align*}
Subtracting $b_{-i}(M)$ from both sides gives
\begin{equation*}
	v_i(X_i) - p_i(X_i,b_{-i}) 
	\ge \frac{v_i(Q_i)}{\beta} - p_i(Q_i,b_{-i}).
\end{equation*}
As $u_i((a_i,b_{-i}),v_i) = v_i(X_i) - p_i(X_i,b_{-i})$ this shows that $u_i((a_i,b_{-i}),v_i) \ge v_i(Q_i)/\beta - p_i(Q_i,b_{-i})$ as claimed.
\end{proof}

%% lemma: prices
\begin{lemma}\label{lem:close}\label{lem:prices}
Suppose that $\OS \subseteq B \subseteq \XOS$ and that the VCG mechanism is used. For every allocation $Q_1,\dots,Q_n$ and all conservative bids $b \in B$ and corresponding allocation $X_1,\dots,X_n$,
\[
	\sum_{i=1}^{n} [p_i(Q_i,b_{-i}) - p_i(X_i,b_{-i})] \le \sum_{i=1}^{n} b_i(X_i) \enspace.
\]
\end{lemma}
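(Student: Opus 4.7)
The plan is to expand the VCG payment $p_i(S, b_{-i}) = b_{-i}(M) - b_{-i}(M \setminus S)$ for both $S = Q_i$ and $S = X_i$, then use (a) the optimality of the VCG allocation $X$ to simplify $p_i(X_i, b_{-i})$, and (b) an explicit allocation of $M \setminus Q_i$ built from $X$ itself to lower-bound $b_{-i}(M \setminus Q_i)$. Writing $W = \sum_j b_j(X_j)$, the optimality of $X$ implies $b_{-i}(M \setminus X_i) = W - b_i(X_i)$: any strictly better reallocation of $M \setminus X_i$ among the agents $\neq i$ could be combined with $X_i$ to contradict the optimality of $X$. Hence each per-agent difference collapses to
\[
p_i(Q_i, b_{-i}) - p_i(X_i, b_{-i}) = (W - b_i(X_i)) - b_{-i}(M \setminus Q_i),
\]
and summing over $i$ reduces the claim to $\sum_i b_{-i}(M \setminus Q_i) \ge (n-2)\, W$.

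The second key step is to construct an allocation of $M \setminus Q_i$ among the $n-1$ agents $\neq i$ by giving agent $j \neq i$ the set $X_j \setminus Q_i$ and dumping the leftover items $X_i \setminus Q_i$ onto any single agent $k \neq i$. Monotonicity of $b_k$ (which holds because $B \subseteq \XOS$) gives $b_{-i}(M \setminus Q_i) \ge \sum_{j \neq i} b_j(X_j \setminus Q_i)$. To relate $b_j(X_j \setminus Q_i)$ back to $b_j(X_j)$, I would use the XOS representation $b_j = \max_\ell a^\ell_j$ to fix an additive clause $a^*_j$ with $a^*_j(X_j) = b_j(X_j)$ and $a^*_j(S) \le b_j(S)$ for $S \subseteq X_j$; additivity of $a^*_j$ then yields
\[
b_j(X_j \setminus Q_i) \ge a^*_j(X_j \setminus Q_i) = b_j(X_j) - a^*_j(X_j \cap Q_i).
\]

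Summing this lower bound over $i$ and $j \neq i$ and swapping the order of summation gives $\sum_i b_{-i}(M \setminus Q_i) \ge (n-1)\, W - \sum_j \sum_{i \neq j} a^*_j(X_j \cap Q_i)$. Because $\{X_j \cap Q_i\}_i$ partitions $X_j$ and $a^*_j$ is additive, $\sum_i a^*_j(X_j \cap Q_i) = a^*_j(X_j) = b_j(X_j)$, so $\sum_{i \neq j} a^*_j(X_j \cap Q_i) \le b_j(X_j)$; summing over $j$ bounds the correction term by $W$, delivering $\sum_i b_{-i}(M \setminus Q_i) \ge (n-2)\, W$ and hence the lemma.

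The main obstacle is getting the tight coefficient $(n-2)$. A naive replacement bound $b_{-i}(M \setminus Q_i) \ge \sum_{j \neq i} b_j(Q_j)$, which uses the partition structure of $Q$ directly, only yields $(n-1) \sum_j b_j(Q_j)$ after summing, and this can be arbitrarily smaller than $(n-2)\, W$ whenever $\sum_j b_j(Q_j) \ll W$. The fix is to reuse $X$ itself inside the alternative allocation and to charge the shortfall on each $b_j$ to a single additive XOS supporting clause $a^*_j$, so that across the double sum the losses telescope through the partition structure of $\{Q_i\}$ to at most $W$ overall.
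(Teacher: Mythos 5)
Your proof is correct and follows essentially the same route as the paper's: expand the VCG payments via $p_i(S,b_{-i}) = b_{-i}(M) - b_{-i}(M\setminus S)$, exploit optimality of $X$ to evaluate $b_{-i}(M\setminus X_i)$, lower-bound $b_{-i}(M\setminus Q_i)$ using the allocation $(X_j\setminus Q_i)_{j\neq i}$, pass to the XOS supporting additive clause for each $X_j$, and use the fact that $\{Q_i\}$ partitions $M$ to make the double sum telescope. The only cosmetic differences are that you collapse $b_{-i}(M\setminus X_i)=W-b_i(X_i)$ up front and drop the $i=j$ term by nonnegativity of the additive clause, whereas the paper keeps the sums intact a step longer and invokes monotonicity to add back the $k=i$ term before counting; these are equivalent bookkeeping choices.
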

\begin{proof}
We have $p_i(Q_i,b_{-i}) = b_{-i}(M) - b_{-i}(M\setminus Q_i)$ and $p_i(X_i,b_{-i}) = b_{-i}(M) - b_{-i}(M\setminus X_i)$ because the VCG mechanism is used. Thus,
\begin{align}
	\sum_{i=1}^{n} [p_i(Q_i,b_{-i}) - p_i(X_i,b_{-i})] 
        &= \sum_{i=1}^{n} [b_{-i}(M\setminus X_i) - b_{-i}(M\setminus Q_i)]. \label{eq:sun}
\end{align}
We have $b_{-i}(M\setminus X_i) = \sum_{k \neq i} b_k(X_k)$ and $b_{-i}(M\setminus Q_i) \ge \sum_{k \neq i} b_k(X_k \cap (M\setminus Q_i))$ because $(X_k \cap (M \setminus Q_i))_{i \neq k}$ is a feasible allocation of the items $M \setminus Q_i$ among the agents $-i$. Thus,
\begin{align}
        \sum_{i=1}^{n} [b_{-i}(M\setminus X_i) - b_{-i}(M\setminus Q_i)]
	&\le \sum_{i=1}^{n} [\sum_{k \neq i} b_k(X_k) - \sum_{k \neq i} b_k(X_k \cap (M\setminus Q_i))] \notag\\
	&\le \sum_{i=1}^{n} [\sum_{k=1}^{n} b_k(X_k) - \sum_{k=1}^{n} b_k(X_k \cap (M\setminus Q_i))] \notag \\%\label{eq:prices}\\
        &= \sum_{i=1}^{n} \sum_{k=1}^{n} b_k(X_k) - \sum_{i=1}^{n} \sum_{k=1}^{n} b_k(X_k \cap (M\setminus Q_i)). \label{eq:moon}
\end{align}
The second inequality holds due to the monotonicity of the bids.
Since $\XOS = 1\text{-}\XOS$ for every agent $k$, bid $b_k \in \XOS$, and set $X_k$ there exists a bid $a_{k,X_k} \in \OS$ such that $b_k(X_k) = a_{k,X_k}(X_k) = \sum_{j \in X_k} a_{k,X_k}(j)$ and $b_k(X_k \cap (M \setminus Q_i)) \ge a_{k,X_k}(X_k \cap (M \setminus Q_i)) = \sum_{j \in X_k \cap (M \setminus Q_i)} a_{k,X_k}(j)$ for all $i$.
As $Q_1, \dots, Q_n$ is a partition of $M$ every item is contained in exactly one of the sets $Q_1, \dots, Q_n$ and hence in $n-1$ of the sets $M\setminus Q_1, \dots, M\setminus Q_n.$ By the same argument for every agent $k$ and set $X_k$ every item $j \in X_k$ is contained in exactly $n-1$ of the sets $X_k \cap (M \setminus Q_1), \dots, X_k \cap (M \setminus Q_n)$. Thus, for every fixed $k$ we have that $ \sum_{i=1}^{n} b_k(X_k \cap (M\setminus Q_i)) \ge (n-1) \cdot \sum_{j \in X_k} a_{k,X_k}(j) = (n-1) \cdot a_{k,X_k}(X_k) = (n-1) \cdot b_k(X_k)$. It follows that
\begin{align}
	\sum_{i=1}^{n} \sum_{k=1}^{n} b_k(X_k) - &\sum_{i=1}^{n} \sum_{k=1}^{n} b_k(X_k \cap (M\setminus Q_i)) \notag\displaybreak[0]\\
        &\le n \cdot \sum_{k=1}^{n} b_k(X_k) - (n-1) \cdot \sum_{k=1}^{n} b_k(X_k) 
        = \sum_{i=1}^{n} b_k(X_k). \label{eq:stars}
\end{align}
The claim follows by combining inequalities (\ref{eq:sun}), (\ref{eq:moon}), and (\ref{eq:stars}).
\end{proof}

\begin{proof}[Proof of \propref{prop:nonadditive}]
Applying \lemref{lem:tight} to the optimal bundles $O_1,\dots,O_n$ and summing over all agents $i$,
\[
	\sum_{i \in N} u_i(a_i,b_{-i},v) \ge \frac{1}{\beta} \OPT(v) - \sum_{i \in N} p_i(O_i,b_{-i}) \enspace.
\]
Applying \lemref{lem:prices} we obtain
\[
	\sum_{i \in N} u_i(a_i,b_{-i},v) \ge \frac{1}{\beta} \OPT(v) - \sum_{i \in N} p_i(X_i(b),b_{-i}) - \sum_{i \in N} b_i(X_i(b)). \tag*{\qedhere}
\]
\end{proof}

%% section: more lower bounds for pne with non-additive bids
\section{More Lower Bounds for PNE with Non-Additive Bids}\label{sec:lower-bounds}

We conclude by proving matching lower bounds for the VCG mechanism and restrictions from fractionally subadditive valuations to non-additive bids.  We prove this result by showing in \appref{app:outcome-closure} that the VCG mechanism satisfies the {\em outcome closure property} of \cite{Milgrom10}, which implies that when going from more general bids to less general bids no new pure Nash equilibria are introduced. Hence the lower bound of $2$ for pure Nash equilibria and additive bids of \cite{ChristodoulouKovacsSchapira08} translates into a lower bound of $2$ for pure Nash equilibria and non-additive bids. 
\begin{theorem}
Suppose that $\OXS \subseteq V \subseteq \CF$, that $\OS \subseteq B \subseteq \XOS$, and that the VCG mechanism is used. Then the PoA with respect to PNE under conservative bidding is at least $2$.
\end{theorem}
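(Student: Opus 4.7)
The plan is to combine the known PoA-$2$ lower bound of \cks for restrictions to additive bids with the outcome closure property of \cite{Milgrom10} announced in the excerpt. I take the conservative additive PNE $b^\star \in \OS$ of VCG together with the valuations $v^\star \in \OXS \subseteq V$ produced by \cks, so that $\OPT(v^\star)/\SW(b^\star) \ge 2$ in the restricted $\OS$-game, and argue that $b^\star$ remains a PNE when each agent $i$'s strategy set is enlarged from $\OS_i$ to $B_i$. Since the VCG allocation (and hence the welfare) at $b^\star$ depends only on the bids and not on the ambient strategy space, the welfare is unchanged and the PoA lower bound of $2$ transfers to the enlarged game.

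The heart of the argument is the outcome closure step: for every $v_i \in V$, every conservative additive $b^\star_{-i}$, and every conservative deviation $\tilde b_i \in B_i \subseteq \XOS$, I would exhibit a conservative additive $\hat b_i \in \OS_i$ with $u_i(\hat b_i, b^\star_{-i}, v_i) \ge u_i(\tilde b_i, b^\star_{-i}, v_i)$; granting this, any $B$-profitable deviation from $b^\star_i$ is matched by an $\OS$-profitable deviation, contradicting the $\OS$-PNE property of $b^\star$. Write $p_i(j) = \max_{k \ne i} b^\star_k(j)$; because $b^\star_{-i}$ is additive, the VCG payment for winning any bundle $T$ equals $\sum_{j \in T} p_i(j)$. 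Let $S$ be the bundle $\tilde b_i$ wins. Since $B \subseteq \XOS$, $\tilde b_i$ admits an additive supporter at $S$, i.e.\ an additive $\alpha$ with $\alpha(S) = \tilde b_i(S)$ and $\alpha(T) \le \tilde b_i(T)$ for every $T$. The VCG optimality condition $\tilde b_i(S) - \tilde b_i(T) \ge \sum_{j \in S\setminus T} p_i(j)$ for $T \subsetneq S$, combined with $\tilde b_i(T) \ge \alpha(T)$ and $\tilde b_i(S) = \alpha(S)$, specializes at $T = S\setminus\{j\}$ to give $\alpha(j) \ge p_i(j)$ for every $j \in S$. Setting $\hat b_i(j) = \alpha(j)$ for $j \in S$ and $\hat b_i(j) = 0$ otherwise makes $\hat b_i$ additive and, via $\alpha \le \tilde b_i \le v_i$ together with monotonicity of $v_i$, conservative; against $b^\star_{-i}$ it wins a superset of $S$ whose extra items carry zero marginal price, so by monotonicity of $v_i$ its utility is at least $v_i(S) - \sum_{j \in S} p_i(j) = u_i(\tilde b_i, b^\star_{-i}, v_i)$.

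The main obstacle is precisely this XOS-supporter step: translating the bundle-level VCG optimality inequality at $S$ into the per-item bound $\alpha(j) \ge p_i(j)$. It relies on two ingredients that both hold in the direction we need: the prices $p_i$ are additive because the opponents bid additively (exactly the direction of outcome closure required for the PNE transfer), and $B \subseteq \XOS$ guarantees an additive supporter of $\tilde b_i$ that matches it at $S$. A secondary technical point is tie-breaking in the VCG allocation, which can be absorbed either by an infinitesimal perturbation of $\alpha$ or by observing that any extra items won by $\hat b_i$ occur at zero price and only increase utility by monotonicity of $v_i$.
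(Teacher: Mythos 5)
Your proposal follows the same high-level plan as the paper — transfer the PoA-$2$ lower bound of \cks for additive bids to $B$ via the outcome closure property of \cite{Milgrom10} — and both the plan and your execution are correct. The difference lies in how you establish the outcome closure step, i.e.\ that for any deviation $\tilde b_i \in B$ against an additive profile $b^\star_{-i}$ there is an additive deviation $\hat b_i \in \OS$ doing at least as well. You argue item-by-item: take the XOS-supporter $\alpha$ of $\tilde b_i$ at the won bundle $S$, use the VCG optimality inequality at $T = S\setminus\{j\}$ together with $\alpha \le \tilde b_i$ and $\alpha(S) = \tilde b_i(S)$ to deduce $\alpha(j) \ge p_i(j)$ for each $j \in S$, and conclude that bidding $\alpha$ on $S$ (and $0$ elsewhere) secures $S$ at the same price. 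This crucially exploits that $b^\star_{-i} \in \OS$, so VCG prices are additive per item, and that $B \subseteq \XOS$, so the supporter exists. The paper's proof of Proposition~\ref{the:weakly-increasing} instead reasons at the bundle level: taking the same supporter $b'_i$ (extended by zero off $X_i$), it observes that $b'_i(X_i) - p_i(X_i,b'_{-i}) = b_i(X_i) - p_i(X_i,b'_{-i})$ and $b'_i(T) - p_i(T,b'_{-i}) \le b_i(T) - p_i(T,b'_{-i})$ for every other $T$, so the VCG argmax over bundles is unchanged. That bundle-level argument never needs additivity of prices, so it proves outcome closure for any nested pair $B' \subseteq B \subseteq \XOS$, not just $B' = \OS$; your per-item route is a correct specialization that buys a concrete per-item picture (``the supporter beats the price on every item of $S$'') at the cost of relying on $B' = \OS$. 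Your treatment of the tie-breaking issue (perturbation or monotonicity on extra zero-price items) is also fine; the paper has the same implicit dependence on favorable tie-breaking.
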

Note that the previous result applies even if valuation and bidding space coincide, and the VCG mechanism has an efficient, dominant-strategy equilibrium. This is because the VCG mechanism admits other, non-efficient equilibria and the Price of Anarchy metric does not restrict to dominant-strategy equilibria if they exist.

%% bibliography
\bibliographystyle{abbrvnat}
\bibliography{abbshort,literature}

%% appendix
\appendix

% % % % % % % % % % % % % % % % % % % % % % % % % % % % % % % % % % % % % % % %

%% section: incomplete information setting
\section{Results for Incomplete Information Setting}\label{app:incomplete}

Denote the distribution from which the valuations are drawn by $\mathcal{D}$. Possibly randomized bidding strategies $b_i: V_i \rightarrow B_i$ form a {\em mixed Bayes-Nash equilibrium (MBNE)} if for every agent $i \in N$, every valuation $v_i \in V_i$, and every pure deviation $b'_i \in B_i$
 \[
	\E_{v_{-i} \sim \mathcal{D}_{-i}}[u_i(b_i,b_{-i},v_i)] \ge \E_{v_{-i} \sim \mathcal{D}_{-i}}[u_i(b'_i,b_{-i},v_i)].
\]

The PoA with respect to mixed Bayes-Nash equilibria for a distribution over valuations is the ratio between the expected optimal social welfare and the expected welfare of the worst mixed Bayes-Nash equilibrium
\[
	\PoA = \max_{b: \ \mathrm{MBNE}} \frac{\E_{v\sim\mathcal{D}}[\OPT(v)]}{\E_{v\sim\mathcal{D}}[\SW(b)]}
\]

\begin{theorem}\label{thm:bayes}
If an auction is relaxed $(\lambda,\mu_1,\mu_2)${-smooth} then the Price of Anarchy with respect to MBNE under conservative bidding is at most
\[
	\frac{\max\{\mu_1,1\} + \mu_2}{\lambda}.
\]
\end{theorem}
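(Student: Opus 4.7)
The plan is to mimic the proof of Theorem~\ref{thm:cce} but with an extra layer of ``phantom valuation'' sampling to deal with the fact that in a mixed Bayes-Nash equilibrium agent $i$ does not know the realized $v_{-i}$. Fix a mixed Bayes-Nash equilibrium $b^{*}$ and let $\mathcal{B}$ denote the induced distribution on bid profiles when $v\sim\mathcal{D}$ (so $\mathcal{B}_{-i}$ is the marginal distribution of $b^{*}_{-i}(v_{-i})$). The relaxed smoothness definition produces, for every realized profile $v$, a bid $a_{i}(v,\mathcal{B}_{-i})$; the point is that agent $i$ cannot play it directly because $v_{-i}$ is unknown.

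First, I would define the randomized deviation available to agent $i$ with actual valuation $v_{i}$: sample an independent phantom profile $w_{-i}\sim\mathcal{D}_{-i}$ and submit $a_{i}((v_{i},w_{-i}),\mathcal{B}_{-i})$. Since $b^{*}$ is an MBNE, this deviation cannot improve agent $i$'s expected utility when $v_{-i}\sim\mathcal{D}_{-i}$. Crucially, by the independence of $\mathcal{D}$ across agents, the pair $(v_{-i},w_{-i})$ has the same distribution as $(w_{-i},v_{-i})$, so I can swap their roles: the expected utility of this deviation equals
\[
\E_{v_{-i}\sim\mathcal{D}_{-i}}\E_{b_{-i}\sim\mathcal{B}_{-i}}[u_{i}((a_{i}(v,\mathcal{B}_{-i}),b_{-i}),v_{i})],
\]
because conditional on $w_{-i}$ the distribution of $b^{*}_{-i}(v_{-i})$ is just $\mathcal{B}_{-i}$. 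Summing the MBNE inequality over $i$ and taking expectation over $v\sim\mathcal{D}$ yields
\[
\sum_{i}\E_{v}[u_{i}(b^{*}(v),v_{i})]\ \ge\ \sum_{i}\E_{v}\E_{b_{-i}\sim\mathcal{B}_{-i}}[u_{i}((a_{i}(v,\mathcal{B}_{-i}),b_{-i}),v_{i})].
\]

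Next, I would apply the relaxed smoothness inequality pointwise for each fixed $v$, taking expectation over $v\sim\mathcal{D}$ on both sides (the payment and bid terms on the right depend only on $\mathcal{B}$, which does not vary with the fixed $v$ being plugged in). Using $\E_{v}[\SW(b^{*}(v))]=\sum_{i}\E_{v}[u_{i}(b^{*}(v),v_{i})]+\sum_{i}\E_{b\sim\mathcal{B}}[p_{i}(X_{i}(b),b_{-i})]$, this collapses to
\[
\E_{v}[\SW(b^{*}(v))]\ \ge\ \lambda\,\E_{v}[\OPT(v)]-(\mu_{1}-1)\sum_{i}\E_{b\sim\mathcal{B}}[p_{i}(X_{i}(b),b_{-i})]-\mu_{2}\sum_{i}\E_{b\sim\mathcal{B}}[b_{i}(X_{i}(b))].
\]
Conservative bidding gives $\sum_{i}\E[b_{i}(X_{i}(b))]\le\E_{v}[\SW(b^{*}(v))]$, and I conclude by the same two-case split as in Theorem~\ref{thm:cce}: if $\mu_{1}\le 1$ the payment term is dropped; if $\mu_{1}>1$ it is bounded by $\E_{v}[\SW(b^{*}(v))]$ via $p_{i}\le v_{i}$. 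Rearranging in either case yields the desired bound $(\max\{\mu_{1},1\}+\mu_{2})/\lambda$.

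The main technical obstacle is the phantom-sampling/swap step, which is the standard device for lifting smoothness-type arguments from full-information to Bayesian settings and relies on independence of valuations across agents (implicit in the notation $\E_{v_{-i}\sim\mathcal{D}_{-i}}$). Everything else is routine bookkeeping, essentially mirroring the proof of Theorem~\ref{thm:cce} with outer expectations over $v\sim\mathcal{D}$.
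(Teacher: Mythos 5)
Your proof is correct and follows the same high-level structure as the paper's proof of Theorem~\ref{thm:bayes}: apply the relaxed-smoothness inequality to bound expected welfare from below by $\lambda\,\E_v[\OPT(v)]$ minus payment and bid terms, then split into the $\mu_1\le 1$ and $\mu_1>1$ cases exactly as in Theorem~\ref{thm:cce}. The genuine difference is that you make explicit a step the paper leaves implicit, and you are right to do so. The deviating bid $a_i(v,\mathcal{B}_{-i})$ of Definition~\ref{def:smoothness} depends on the \emph{entire} profile $v$, including $v_{-i}$ (e.g.\ through the optimal bundle $O_i$ in the applications), so it is not by itself a legal Bayesian deviation for agent $i$, who observes only $v_i$. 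The paper's proof nonetheless invokes the MBNE condition directly on $a_i$, which is a small gap; your phantom-sampling device closes it: sample $w_{-i}\sim\mathcal{D}_{-i}$ independently, play $a_i((v_i,w_{-i}),\mathcal{B}_{-i})$, and then use exchangeability of the i.i.d.\ pair $(v_{-i},w_{-i})$ to rewrite the deviation's expected utility as $\E_{v_{-i}}\E_{b_{-i}\sim\mathcal{B}_{-i}}[u_i((a_i(v,\mathcal{B}_{-i}),b_{-i}),v_i)]$. You also correctly flag that this hinges on independence of valuations across agents, which is what the notation $\E_{v_{-i}\sim\mathcal{D}_{-i}}$ implicitly assumes. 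This is the standard Bayes-Nash extension trick from the smoothness literature, so the authors likely viewed it as routine; your rendering is strictly more careful. The remaining steps (pointwise application of smoothness, outer expectations over $v$, the conservativeness bound $\sum_i b_i(X_i(b))\le\SW(b)$, and the two-case rearrangement) all match the paper.
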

\begin{proof}
Fix a distribution $\mathcal{D}$ on valuations $v$. Consider a mixed Bayes-Nash equilibrium $\mathcal{B}$ and denote the allocation for bids $b$ by $X(b) = (X_1(b),\dots,X_n(b))$. Let $a = (a_1,\dots,a_n)$ be defined as in \defref{def:smoothness}. Then, 
\newcommand{\Exp}{\mathop{\E_{v \sim D}}}
\begin{align*}
	\Exp[\text{SW}(b)]
	&= \sum_{i=1}^{n} \hspace{-4pt} \E_{v_i \sim D_i} [\mathop{\E_{v_{-i} \sim D_{-i}}}[u_i((b_i,b_{-i}),v_i)]] + \hspace{-8pt} \Exp[\sum_{i=1}^{n} p_i(X_i(b),b_{-i})]]\\ \displaybreak[0]
	&\ge \sum_{i=1}^{n} \hspace{-4pt} \E_{v_i \sim D_i} [\mathop{\E_{v_{-i} \sim D_{-i}}}[u_i((a_i,b_{-i}),v_i)]] + \hspace{-8pt} \Exp[\sum_{i=1}^{n} p_i(X_i(b),b_{-i})]\\ \displaybreak[0]
	&\ge \lambda \cdot OPT(v) - (\mu_1-1) \Exp[\sum_{i=1}^{n} p_i(X_i(b),b_{-i})] - \mu_2 \cdot \Exp[\sum_{i=1}^{n} b_i(X_i(b))],
\end{align*}
where the first equality uses the definition of $u_i((b_i,b_{-i}),v_i)$ as the difference between $v_i(X_i(b))$ and $p_i(X_i(b),b_{-i})$, the first inequality uses the fact that $\mathcal{B}$ is a mixed Bayes-Nash equilibrium, and the second inequality uses that $a=(a_1,\dots,a_n)$ is defined as in \defref{def:smoothness}.

Since the bids are conservative this can be rearranged to give
\begin{align*}
	(1+\mu_2) \Exp[\text{SW}(b)]
	&\ge \lambda \cdot OPT(v) - (\mu_1-1) \Exp[\sum_{i=1}^{n} p_i(X_i(b),b_{-i})].
\end{align*}

For $\mu_1 \le 1$ the second term on the right hand side is lower bounded by zero and the result follows by rearranging terms. For $\mu_1 > 1$ we use that $\E_{v\sim\mathcal{D}}[p_i(X_i(b),b_{-i})] \le \E_{v\sim\mathcal{D}}[v_i(X_i(b))]$ to lower bound the second term on the right hand side and the result follows by rearranging terms.
\end{proof}

% % % % % % % % % % % % % % % % % % % % % % % % % % % % % % % % % % % % % % % % 

%% section: proof of theorem thm:np
\section{Proof of Theorem \ref{thm:np}}\label{app:hardness}
Given an instance of \textsc{3-Partition} consisting of a multiset of $3m$ positive integers $w_1, \dots, w_{3m}$ that sum up to $mB$, we construct an instance of a combinatorial auction in which the agents have subadditive valuations in polynomial time as follows:

The set of agents is $B_1,\dots,B_m$ and $C_1,\dots,C_m$. The set of items is $\mathcal{I}\cup \mathcal{J}$, where $\mathcal{I}=\{I_1,\dots,I_{3m}\}$ and $\mathcal{J}=\{J_1,\dots,J_{3m}\}$. Let $\mathcal{J}_i=\{J_i,J_{m+i},J_{2m+i}\}$.
Every agent $B_i$ has valuations
\begin{align*}
	v_{B_i}(S)&=\max\{v_{\mathcal{I},B_i}(S), v_{\mathcal{J},B_i}(S)\}, \qquad &&\text{where}\\
	v_{\mathcal{I},B_i}(S)&=\sum_{e\in \mathcal{I}\cap S} w_e, &&\text{and}\\
	v_{\mathcal{J},B_i}(S)&=
	\begin{cases}
	10B&\text{if}\ |\mathcal{J}_i\cap S|=3,\\
	5B&\text{if}\ |\mathcal{J}_i\cap S|\in \{1,2\},\\
	0&\text{otherwise.}
	\end{cases}
%\end{align*}
%%
\intertext{Every agent $C_i$ has valuations}
%\begin{align*}
	v_{C_i}(S)&=
	\begin{cases}
	16B&\text{if}\ |\mathcal{J}_i\cap S|=3,\\
	8B&\text{if}\ |\mathcal{J}_i\cap S|\in \{1,2\},\\
	0&\text{otherwise.}
\end{cases}
\end{align*}
The valuations for the items in $\mathcal{J}$ are motivated by an example for valuations without a PNE in \cite{BhawalkarRoughgarden11}.
Note that our valuations are subadditive.

We show first that if there is a solution of our \textsc{3-Partition} instance then the corresponding auction has a PNE.
Let us assume that $P_1,\dots,P_m$ is a solution of \textsc{3-Partition}. We obtain a PNE when every agent $B_i$ bids $w_j$ for each $I_j$ with $j\in P_i$ and zero for the other items; and every agent $C_i$ bids $4B$ for each item in $\mathcal{J}_i$.
The first step is to show that no agent $B_i$ would change his strategy. The utility of $B_i$ is $B$, because $B_i$'s payment is zero.
As the valuation function of $B_i$ is the maximum of his valuation for the items in $\mathcal{I}$ and the items in $\mathcal{J}$ we can study the strategies for $\mathcal{I}$ and $\mathcal{J}$ separately.
If $B_i$ would change his bid and win another item in $\mathcal{I}$, $B_i$ would have to pay his valuation for this item because there is an agent $B_j$ bidding on it, and, thus, his utility would not increase. As $B_i$ bids conservatively, $B_i$ could win at most one item of the items in $\mathcal{J}_i$. His value for the item would be $5B$, but the payment would be $C_i$'s bid of $4B$. Thus, his utility would not be larger than $B$ if $B_i$ would win an item of $\mathcal{J}$. Hence, $B_i$ does not want to change his bid.
The second step is to show that no agent $C_i$ would change his strategy. This follows since the utility of every agent $C_i$ is $16B$, and this is the highest utility that $C_i$ can obtain.

We will now show two facts that follow if the auction is in a PNE:
(1) We first show that in every PNE every agent $B_i$ must have a utility of at least $B$. To see this denote the bids of agent $C_i$ for the items in $\mathcal{J}_i$ by $c_1$, $c_2$, and $c_3$ and assume w.l.o.g.~that $c_1\leq c_2\leq c_3$. As agent $C_i$ bids conservatively, $c_2+c_3\leq 8B$, and, thus, $c_1\leq 4B$. If agent $B_i$ would bid $5B$ for $c_1$, $B_i$ would win $c_1$ and his utility would be at least $B$, because $B_i$ has to pay $C_i$'s bid for $c_1$. As $B_i$'s utility in the PNE cannot be worse, his utility in the PNE has to be at least $B$. (2) Next we show that in a PNE agent $B_i$ cannot win any of the items in $\mathcal{J}_i$. For a contradiction suppose that agent $B_i$ wins at least one of the items in $\mathcal{J}_i$ by bidding $b_1$, $b_2$, and $b_3$ for the items in $\mathcal{J}_i$. Then agent $C_i$ does not win the whole set $\mathcal{J}_i$ and his utility is at most $8B$. As agent $B_i$ bids conservatively, $b_i+b_j \le 5B$ for $i \neq j \in \{1,2,3\}$. Then, $b_1+b_2+b_3 \le 7.5B$. Agent $C_i$ can however bid $b_1+\epsilon$, $b_2+\epsilon$, $b_3+\epsilon$ for some $\epsilon > 0$ without violating conservativeness to win all items in $\mathcal{J}_i$ for a utility of at least $16B - 7.5B > 8B$. Thus, $C_i$'s utility increases when $C_i$ changes his bid, i.e., the auction is not in a PNE.

Now we use fact (1) and (2) to show that our instance of \textsc{3-Partition} has a solution if the auction has a PNE. Let us assume that the auction is in a PNE. By (1) we know that every agent $B_i$ gets at least utility $B$. Furthermore, by (2) we know that every agent $B_i$ wins only items in $\mathcal{I}$. It follows that every agent $B_i$ pays zero and has exactly utility $B$. Thus, the assignment of the items in $\mathcal{I}$ corresponds to a solution of \textsc{3-Partition}.

% % % % % % % % % % % % % % % % % % % % % % % % % % % % % % % % % % % % % % % %

%% section: proof of lemma lem:subadditive
\section{Proof of Lemma \ref{lem:subadditive}}\label{app:subadditive}
As $b_i \in \XOS$ there exists an additive bid $a_i$ such that $\sum_{j \in X} a_i(j) = b_i(X)$ and for every $S \subseteq X$ we have $b_i(S) \ge \sum_{j \in S} a_i(j).$ There are $|X|$ many ways to choose $S \subseteq X$ such that $|S| = |X|-1$ and these $|X|$ many sets will contain each of the items $j \in X$ exactly $|X| - 1$ times. Thus, 
%\[
$	\sum_{S \subseteq X, |S| = |X|-1} b_i(S) \ge (|X|-1) \cdot b_i(X).$
%\]
%%
For any set $T \in \arg \max_{S \subseteq X, |S| = |X|-1} b_i(S)$, using the fact that the maximum is at least as large as the average, we therefore have $b_i(T) \ge (|X|-1)/|X| \cdot b_i(X)$. %using the usual ``the maximum is as at least as large as the average'' trick

% % % % % % % % % % % % % % % % % % % % % % % % % % % % % % % % % % % % % % % %

%% section: outcome closure
\section{Outcome Closure}\label{app:outcome-closure}
We say that a mechanism satisfies {\em outcome closure} for a given class $V$ of valuation functions and a restriction of the class $B$ of  bidding functions to a subclass $B'$ of bidding functions if for every $v \in V$, every $i$, every conservative $b'_{-i} \in B'$, and every conservative $b_i \in B$ there exists a conservative $b'_i \in B'$ such that $u_i(b'_i,b'_{-i},v_i) \ge u_i(b_i,b'_{-i},v_i)$.

\begin{proposition}\label{prop:outcome-closure}
If a mechanism satisfies outcome closure for a given class $V$ of valuation functions and a restriction of the class $B$ of bidding functions to a subclass $B'$, then the Price of Anarchy with respect to pure Nash equilibria under conservative bidding for $B$ is at least as large as for $B'$.
\end{proposition}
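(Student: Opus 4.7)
The plan is to show that every conservative pure Nash equilibrium in the restricted bidding space $B'$ remains a conservative pure Nash equilibrium in the larger space $B$. Once this containment of equilibrium sets is established, the Price of Anarchy comparison follows immediately: any bad equilibrium witnessing a large PoA for $B'$ also witnesses the same welfare ratio as an equilibrium for $B$, so the supremum of ratios over PNE for $B$ cannot be smaller.

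To carry out the first step, I would fix an arbitrary conservative PNE $b' = (b'_1, \dots, b'_n) \in B'$ for valuations $v \in V$, and argue that no agent $i$ has a profitable conservative deviation in the larger space $B$. Suppose for contradiction that some agent $i$ has a conservative deviation $b_i \in B$ with $u_i(b_i, b'_{-i}, v_i) > u_i(b'_i, b'_{-i}, v_i)$. The outcome closure hypothesis, applied to the profile $b'_{-i} \in B'$ and the conservative bid $b_i \in B$, yields a conservative bid $b''_i \in B'$ such that
\[
u_i(b''_i, b'_{-i}, v_i) \ge u_i(b_i, b'_{-i}, v_i).
\]
But then $b''_i \in B'$ is a profitable deviation from $b'_i$ within $B'$, contradicting the assumption that $b'$ is a PNE of the restricted game. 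Hence $b'$ is also a PNE of the unrestricted game on $B$.

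For the PoA comparison, let $b' \in B'$ be a conservative PNE realizing the worst welfare ratio defining $\PoA_{B'}(v)$; by the argument above, $b'$ is also a conservative PNE in $B$ with the same social welfare $\SW(b')$, so $\PoA_B(v) \ge \OPT(v)/\SW(b') = \PoA_{B'}(v)$. Taking the worst case over $v \in V$ yields the proposition.

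The argument is essentially a direct unpacking of the definitions, so the only conceptual hurdle is to notice that outcome closure is exactly the right hypothesis to transport deviations in $B$ back into $B'$ while holding the opponents' bids fixed in $B'$; in particular, it is crucial that the replacement $b''_i$ is guaranteed to be conservative, which matches the conservative bidding restriction under which the PoA is defined. No estimate on payments, valuations, or prices is needed, and the result holds for arbitrary mechanisms satisfying the closure property, not just VCG.
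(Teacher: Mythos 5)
Your proof is correct and matches the paper's argument: show that every conservative PNE in $B'$ is also a conservative PNE in $B$ by using outcome closure to transport any hypothetical profitable deviation in $B$ back to $B'$, then conclude the PoA inequality from this containment. You spell out the final ratio comparison slightly more explicitly than the paper does, but the core idea and its execution are the same.
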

\begin{proof}
It suffices to show that the set of PNE for $B'$ is contained in the set of PNE for $B$. To see this assume by contradiction that, for some $v \in V$, $b' \in B'$ is a PNE for $B'$ but not for $B$. As $b'$ is not a PNE for $B$ there exists an agent $i$ and a bid $b_i \in B$ such that $u_i(b_i,b'_{-i},v_i) > u_i(b'_i,b'_{-i},v_i)$. By outcome closure, however, there must be a bid $b''_i \in B'$ such that $u_i(b''_i,b'_{-i},v_i) \ge u_i(b_i,b'_{-i},v_i)$. It follows that $u_i(b''_i,b'_{-i},v_i) > u_i(b'_i,b'_{-i},v_i)$, which contradicts our assumption that $b'$ is a PNE for $B'$.
\end{proof}

Next we use outcome closure to show that the Price of Anarchy in the VCG mechanism with respect to pure Nash equilibria weakly increases with expressiveness for classes of bidding functions below XOS.

\begin{proposition}\label{the:weakly-increasing}
Suppose that $V \subseteq \CF$, that $B' \subseteq B \subseteq \XOS$, and that the VCG mechanism is used. Then the Price of Anarchy with respect to pure Nash equilibria under conservative bidding for $B$ is at least as large as for $B'$.
\end{proposition}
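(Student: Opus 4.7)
The plan is to use \propref{prop:outcome-closure} to reduce the theorem to showing that the VCG mechanism satisfies the outcome closure property when the bidding class is restricted from $B$ to $B'$. So fix a valuation profile $v\in V$, an agent $i$, a conservative bid $b_i\in B$, and conservative bids $b'_{-i}\in B'$ of the other agents; let $S=X_i((b_i,b'_{-i}))$ be the bundle $i$ wins. The goal is to exhibit a conservative bid $b'_i\in B'$ with $u_i((b'_i,b'_{-i}),v_i)\ge u_i((b_i,b'_{-i}),v_i)$.

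Since $b_i\in\XOS$, there is an additive XOS supporting clause $a_i$ of $b_i$ at $S$, meaning $a_i(S)=b_i(S)$ and $a_i(T)\le b_i(T)$ for every $T\subseteq M$. I would set $b'_i=a_i$: this bid is additive and hence lies in $\OS\subseteq B'$ (as the bidding classes considered in the hierarchy of \cite{LehmannLehmannNisan05} all contain $\OS$), and it is conservative because $a_i(T)\le b_i(T)\le v_i(T)$. To compare utilities, I would show that the VCG argmax shrinks under $a_i$ but still contains $S$. Indeed, for any $T$ with $a_i(T)+b'_{-i}(M\setminus T)=a_i(S)+b'_{-i}(M\setminus S)$, combining $a_i(S)=b_i(S)$ with $a_i(T)\le b_i(T)$ gives $b_i(T)+b'_{-i}(M\setminus T)\ge b_i(S)+b'_{-i}(M\setminus S)$; since $S$ is argmax for $b_i$ equality must hold, so $T$ is also in the argmax of $b_i$. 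This proves the inclusion $\mathrm{argmax}(a_i(\cdot)+b'_{-i}(M\setminus\cdot))\subseteq\mathrm{argmax}(b_i(\cdot)+b'_{-i}(M\setminus\cdot))$, with $S$ belonging to the former.

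The main obstacle is then to confirm that VCG actually returns $S$ under $a_i$ and not some other element of this shrunk argmax, since the utility $v_i(T^\ast)-p_i(T^\ast,b'_{-i})$ need not match $v_i(S)-p_i(S,b'_{-i})$ at an alternative winner $T^\ast$. I would resolve this using the standard convention that VCG breaks ties according to a fixed total order $\prec$ on bundles, picking the $\prec$-smallest argmax bundle. Under $b_i$ the mechanism selected $S$, so $S$ is $\prec$-smallest in the full argmax; since the argmax under $a_i$ is a subset that still contains $S$, the element $S$ remains $\prec$-smallest in it, so VCG picks $S$ again. The payment depends only on $b'_{-i}$, whence $u_i((a_i,b'_{-i}),v_i)=v_i(S)-p_i(S,b'_{-i})=u_i((b_i,b'_{-i}),v_i)$, proving outcome closure and hence the theorem. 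If one wishes to avoid committing to a particular tie-breaker, a small additive perturbation of $a_i$ that subtracts $\epsilon>0$ from each item outside $S$ (and, if required to break ties with strict subsets of $S$, adds a comparably small amount to items inside $S$ subject to maintaining conservativity) makes $S$ the unique VCG winner and recovers the same conclusion by continuity as $\epsilon\to 0$.
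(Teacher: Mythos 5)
Your proof follows essentially the same route as the paper's: reduce to the outcome closure property via Proposition~\ref{prop:outcome-closure}, then construct the deviating additive bid from the XOS supporting clause of $b_i$ at the won bundle $X_i$, observing that this preserves agent $i$'s reported utility for $X_i$ and only lowers it elsewhere. The paper truncates the clause to $X_i$ and then simply asserts that agent $i$ still wins $X_i$; your version is slightly more careful in noting that the VCG argmax of reported utility shrinks to a subset still containing $X_i$ and that a consistent tie-breaking rule (or a small perturbation) then forces the same winner.
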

\begin{proof}
By Proposition~\ref{prop:outcome-closure} it suffices to show that the VCG mechanism satisfies outcome closure for $V$ and the restriction of $B$ to $B'$. 
For this fix valuations $v \in V$, bids $b'_{-i} \in B'$, and consider an arbitrary bid $b_i \in B$ by agent $i$. Denote the bundle that agent $i$ gets under $(b_i,b'_{-i})$ by $X_i$ and denote his payment by $p_i = p_i(X_i,b'_{-i}).$ Since $b_i \in B \subseteq \XOS$ there exists a bid $b'_i \in \OS \subseteq B'$ such that
\begin{align*}
        &\sum_{j \in X_i} b'_i(j) = b_i(X_i) &&\text{and,}\\
	&\sum_{j \in S} b'_i(j) \le b_i(S) &&\text{for all $S \subseteq X_i$.}
\end{align*}

By setting $b'_i(j) = 0$ for $j \not\in X_i$ we ensure that $b'_i$ is conservative. Recall that the VCG mechanism assigns agent $i$ the bundle of items that maximizes his reported utility. We have that $b'_i(X_i) = b_i(X_i)$ and that $b'_i(T) \le b_i(T)$ for all $T \subseteq M$. We also know that the prices $p_i(T,b_{-i})$ for all $T \subseteq M$ do not depend on agent $i$'s bid. Hence agent $i$'s reported utility for $X_i$ under $b'$ is as high as under $b$ and his reported utility for every other bundle $T$ under $b'$ is no higher than under $b$. This shows that agent $i$ wins bundle $X_i$ and pays $p_i$ under bids $(b'_i,b'_{-i})$.
\end{proof}

% % % % % % % % % % % % % % % % % % % % % % % % % % % % % % % % % % % % % % % %

\end{document}